\RequirePackage{amsmath}
\documentclass[runningheads]{llncs}
\usepackage{multicol}
\usepackage{amsfonts}
\usepackage{amssymb}
\usepackage{graphicx}
\usepackage{epic}
\usepackage{eepic}
\usepackage{epsfig,float}
\usepackage{verbatim}
\usepackage{pdfsync}
\usepackage{gastex}
\usepackage[lowtilde]{url}
\usepackage{geometry}
\usepackage{enumitem}
\usepackage{url} 
\usepackage{mathtools}
\usepackage{xcolor}
\usepackage{bbm}

\usepackage{ifpdf}
\ifpdf
\usepackage{pgf}
\usepackage{tikz}
\usepackage{todonotes}
\fi

\DeclareGraphicsRule{.tif}{png}{.png}{`convert #1 `dirname #1`/`basename #1 .tif`.png}

\newcommand{\tge}{\trianglerighteq}
\newcommand{\tle}{\trianglelefteq}
\newcommand{\tg}{\triangleright}
\newcommand{\tl}{\triangleleft}

\usepackage{xcolor}

\usepackage[utf8]{inputenc}\usepackage[T1]{fontenc}

\renewcommand{\le}{\leqslant}
\renewcommand{\ge}{\geqslant}

\newcommand{\eps}{\varepsilon}

\newcommand{\Sig}{\Sigma}

\newcommand{\noin}{\noindent}

\newcommand{\bi}{\begin{itemize}}
\newcommand{\ei}{\end{itemize}}
\newcommand{\be}{\begin{enumerate}}
\newcommand{\ee}{\end{enumerate}}
\newcommand{\bd}{\begin{description}}
\newcommand{\ed}{\end{description}}
\newcommand{\bq}{\begin{quote}}
\newcommand{\eq}{\end{quote}}

\newcommand{\cD}{{\mathcal D}}

\newcommand{\cM}{{\mathcal M}}
\newcommand{\cN}{{\mathcal N}}

\newcommand{\cR}{{\mathcal R}}
\newcommand{\cS}{{\mathcal S}}

\newcommand{\one}{\mathbbm{1}}

\newcommand{\lraL}{{\mathbin{\approx_L}}}

\newenvironment{rethm}[2]{\noindent\textbf{#1 (Restated).}~\emph{#2}}

\DeclareMathOperator{\im}{im}


\title{Complexity of Proper Suffix-Convex Regular Languages
\thanks{This work was supported by the Natural Sciences and Engineering Research Council of Canada grant No.~OGP0000871, NSERC Discovery grant No.~8237-2012, and the Canada Research Chairs Program.}
}

\author{Corwin Sinnamon}
\authorrunning{C. Sinnamon}  
\titlerunning{Complexity of Proper Suffix-Convex Languages}

\institute{David R. Cheriton School of Computer Science, University of Waterloo, \\
Waterloo, ON, Canada N2L 3G1\\
{\tt sinncore@gmail.com}
}

\begin{document}
\maketitle
\begin{abstract}
A language $L$ is suffix-convex if for any words $u, v, w$, whenever $w$ and $uvw$ are in $L$, $vw$ is in $L$ as well.
Suffix-convex languages include left ideals, suffix-closed languages, and suffix-free languages, which were studied previously. 
In this paper, we concentrate on suffix-convex languages that do not belong to any one of these classes; we call such languages \emph{proper}.
In order to study this language class, we define a structure called a \emph{suffix-convex triple system} that characterizes the automata recognizing suffix-convex languages.
We find tight upper bounds for reversal, star, product, and boolean operations of proper suffix-convex languages, and we conjecture on the size of the largest syntactic semigroup.
We also prove that three witness streams are required to meet all these bounds.
\medskip

\noin
{\bf Keywords:}
atom, most complex, suffix-convex, proper, quotient complexity, regular language, state complexity, syntactic semigroup
\end{abstract}

\section{Introduction} 
\noin{\bf Suffix-Convex Languages:}
Convex languages were introduced in 1973 by Thierrin~\cite{Thi73}, and revisited in 2009 by Ang and Brzozowski~\cite{AnBr09}.
Convexity can be defined with respect to several binary relations on words, but in this paper we concentrate only on suffix-convex regular languages.
If a word $w \in \Sigma^*$ can be written as $w=xy$ for $x, y \in \Sigma^*$, then $y$ is a \emph{suffix} of $w$.
A language $L$ is \emph{suffix-convex} if whenever $w$ and $uvw$ are in $L$,  then $vw$ is also in $L$, for all $u, v, w \in \Sig^*$.
The class of suffix-convex languages includes three well-known subclasses: left ideals, suffix-closed languages, and suffix-free languages.

A language $L$ over an alphabet $\Sigma$ is a \emph{left ideal} if it is non-empty and $L=\Sigma^*L$.
In other words, if $L$ contains a word $w \in \Sigma^*$, then it also contains every word in $\Sigma^*$ that has $w$ as a suffix.
Left ideals play a role in pattern matching: If one is searching for all words ending with words in some language $L$ in a given text, then one is looking for words in $\Sig^*L$.
Left ideals also constitute a basic concept in semigroup theory.

A language $L$ is \emph{suffix-closed} if every suffix of every word in $L$ is also in $L$.
The complement of every suffix-closed language (other than $\Sigma^*$) is a left ideal.

A language is \emph{suffix-free} if no word in the language is a suffix of another word in the language.
Suffix-free languages are suffix codes (with the exception of $\{\eps\}$, where $\eps$ is the empty word).
They play an important role in coding theory and have been studied extensively; see~\cite{BPR10} for example. 

\noin{\bf Contributions:} In this paper, we focus on the remaining suffix-languages that do not fall into any of these subclasses; we call these languages \emph{proper}.
These languages are wide-ranging in structure and appearance, and difficult to reason about using conventional methods.
In order to approach the complexity properties of proper languages, we develop a theory of suffix-convex regular languages
based on a new object we call a suffix-convex triple system.
We use this theory to discover and prove tight upper bounds for reversal, star, product, and boolean operations of proper languages.
We describe a proper language that we conjecture to have the largest possible syntactic semigroup.
Finally, we prove that three different language streams are required to meet all of these bounds.

\section{Background}

\noin{\bf Quotient/State Complexity:} If $L$ is a language over an alphabet $\Sig^*$, such that every letter of $\Sig$ appears in a word of $L$, then the \emph{(left) quotient} of $L$ by a word $w\in\Sig^*$ is $w^{-1}L=\{x\mid wx\in L\}$. A language is regular if and only if the set of distinct quotients is finite. For this reason the number of quotients of $L$ is a natural measure of complexity for $L$; this number is called the 
\emph{quotient complexity}~\cite{Brz10} of $L$.
A equivalent concept is the \emph{state complexity}~\cite{Yu01} of $L$, which is the number of states in a complete minimal deterministic finite automaton (DFA) over alphabet $\Sig$ recognizing $L$.
We refer to quotient/state complexity simply as \emph{complexity} and we denote it by $\kappa(L)$.

If $\circ$ is a unary operation on languages, then the \emph{quotient/state complexity of $\circ$} is the maximal value of $\kappa(L_n^\circ)$, expressed as a function of $n$, as $L_n$ ranges over all regular languages of complexity $n$ or less.
Similarly, if $\circ$ is a binary operation on languages, then the \emph{quotient/state complexity of $\circ$} is  the maximal value of $\kappa(L'_m \circ L_n)$, expressed as a function of $m$ and $n$, as $L'_m$ and $L_n$ range over all regular languages of complexity $m$ and $n$, respectively.
We assume in this paper that $L'_m$ and $L_n$ are over a common alphabet $\Sigma$, however the \emph{unrestricted} complexity of binary operations, where the two languages may use different alphabets, has recently been studied as well~\cite{Brz16,BrSi17}.
The complexity of an operation gives a worst-case bound on the time and space complexity of the operation, and it has been studied extensively (see~\cite{Brz10,Brz13,GMRY16,HoKu11,Yu01}).
\smallskip

\noin
{\bf Witness Streams:}
To find the complexity of a unary operation one proves an upper bound on this complexity and then exhibits languages that meet this bound. Since a bound is given as a function of $n$, we require a sequence of languages $(L_k, L_{k+1}, \dots)$ called a language \emph{stream}; here $k$ is usually a small integer because the bound may not hold for a few small values of $n$.
Usually the languages in a stream have the same basic structure and differ only in the parameter $n$. For example, $((a^n)^* \mid n\ge 2)$ is a stream.
Two streams are required for a binary operation.
Sometimes the same stream can be used for both arguments, however this is not the case in general.
\smallskip

\noin{\bf Dialects:}
It has been shown in~\cite{Brz13} that for all common binary operations on regular languages the second stream can be a ``dialect'' of the first.
Let $\Sigma=\{a_1, \dots, a_k\}$ be an alphabet ordered as shown;
if $L\subseteq \Sigma^*$, we denote it by $L(a_1,\dots,a_k)$.
A \emph{dialect}  of $L$ is obtained by changing or deleting letters of $\Sigma$ in the words of $L$.
More precisely, if $\Sigma'$ is an alphabet, a dialect of $L(a_1, \dots, a_k)$ is obtained from an injective partial map $\pi \colon \Sigma \mapsto \Sigma'$ by replacing each letter $a \in \Sigma$ by $\pi(a)$ in every word of $L$,
or deleting the word entirely if $\pi(a)$ is undefined.
We write $L(\pi(a_1),\dots, \pi(a_k))$ to denote the dialect of $L(a_1,\dots,a_k)$ given by $\pi$,
and we denote undefined values of $\pi$ by  ``$-$''.
Undefined values for letters at the end of the alphabet are omitted; for example, $L(a,c,-,-)$ is written as $L(a,c)$.
\smallskip

\noin{\bf Automata:}
A \emph{deterministic finite automaton (DFA)} is a quintuple
$\cD=(Q, \Sigma, \delta, q_0,F)$, where
$Q$ is a finite non-empty set of \emph{states},
$\Sig$ is a finite non-empty \emph{alphabet},
$\delta\colon Q\times \Sig\to Q$ is the \emph{transition function},
$q_0\in Q$ is the \emph{initial} state, and
$F\subseteq Q$ is the set of \emph{final} states.
We extend $\delta$ to a function $\delta\colon Q\times \Sig^*\to Q$ as usual.
A DFA $\cD$ \emph{accepts} a word $w \in \Sigma^*$ if and only if $\delta(q_0,w) \in F$.
The language of all words accepted by $\cD$ is denoted $L(\cD)$.
If $q$ is a state of $\cD$, then the language of $q$ is the language accepted by the DFA $(Q,\Sigma,\delta,q,F)$.
The language of $q$ is a quotient of $L(\cD)$, and we often denote it $K_q$.
A state is \emph{empty} or a \emph{sink state} if its language is empty.
Two states $p$ and $q$ of $\cD$ are \emph{equivalent} if $K_p = K_q$; otherwise they are \emph{distinguishable}. 
A state $q$ is \emph{reachable} if there exists $w\in\Sig^*$ such that $\delta(q_0,w)=q$.
A DFA is \emph{minimal} if all of its states are reachable and no two states are equivalent.
Usually DFAs are used to establish upper bounds on the complexity of operations and also as witnesses that meet these bounds.
For convenience, say that a DFA is (proper) suffix-convex if the language it accepts is (proper) suffix-convex.

A \emph{nondeterministic finite automaton (NFA)} is a quintuple
$\cD=(Q, \Sigma, \delta, I,F)$, where
$Q$,
$\Sig$ and $F$ are defined as in a DFA, 
$\delta\colon Q\times \Sig\to 2^Q$ is the \emph{transition function}, and
$I\subseteq Q$ is the \emph{set of initial states}. 
An \emph{$\eps$-NFA} is an NFA in which transitions under the empty word $\eps$ are also permitted.

\noin{\bf Transformations:}
Without loss of generality we take $Q_n=\{0,\dots,n-1\}$ to be the states set of every DFA with $n$ states.
A \emph{transformation} of $Q_n$ is a function $t \colon Q_n\to Q_n$.
We treat a transformation $t$ as an operator acting on $Q_n$ from the right, so that $qt$ denotes the \emph{image} of $q\in Q_n$ under $t$.
If $s, t$ are transformations of $Q_n$, their composition is denoted by $s \circ t$, or more commonly just $s t$, and defined by $q(st)=(qs)t$.
In any DFA, each letter $a\in \Sig$ induces a transformation $\delta_a$ of the set $Q_n$ defined by $q\delta_a=\delta(q,a)$. 
By a slight abuse of notation, we use the letter $a$ to denote the transformation it induces; thus we write $qa$ instead of $q\delta_a$.
We also extend the notation to sets of states: if $P\subseteq Q_n$, then $Pa=\{pa\mid p\in P\}$.
Alternatively, we write $P\xrightarrow{a} P'$ to indicate that the image of $P$ under $a$ is $P'$.

For $k\ge 2$, a transformation $t$ of a set $P=\{q_0,q_1,\dots,q_{k-1}\} \subseteq Q_n$ is called a \emph{$k$-cycle}
if $q_0t=q_1, q_1t=q_2,\dots,q_{k-2}t=q_{k-1},q_{k-1}t=q_0$, and we denote such a cycle by $(q_0,q_1,\dots,q_{k-1})$.
A 2-cycle $(q_0,q_1)$ is called a \emph{transposition}.
A transformation is a called a \emph{permutation} if it is bijective, or equivalently, if it can be written as a composition of cycles.
A transformation  that sends all the states of $P$ to $q$ and acts as the identity on the remaining states is denoted by $(P \to q)$.
 If $P=\{p\}$ we write  $p\to q$ for $(\{p\} \to q)$.
 The identity transformation is denoted by $\one$.
 The notation $(_i^j \; q\to q+1)$ denotes a transformation that sends $q$ to $q+1$ for $i\le q\le j$ and acts as the identity for the remaining states.
The notation $(_i^j \; q\to q-1)$ is defined similarly. Using composition, the notation introduced here lets us succinctly describe many different transformations.

\noin{\bf Semigroups:}
Let $\cD = (Q_n, \Sig, \delta, q_0, F)$ be a DFA.
For each word $w \in \Sig^*$, the transition function induces a transformation $\delta_w$ of $Q_n$ by  $w$: for all $q \in Q_n$, $q\delta_w = \delta(q, w)$.
The set $T_{\cD}$ of all such transformations by non-empty words forms a semigroup of transformations called the \emph{transition semigroup} of $\cD$~\cite{Pin97}. 
Conversely, we may define $\delta$ by describing $\delta_a$ for each $a \in \Sig$.
We write $a\colon t$, where $t$ is a transformation of $Q$, to mean that the transformation $\delta_a$ induced by $a$ is $t$. 

The \emph{Myhill congruence}~\cite{Myh57} $\lraL$ of a language $L\subseteq \Sig^*$ is defined on $\Sig^+$ as follows:
For $x, y \in \Sig^+$, $x \lraL y$ if and only if $wxz\in L \iff wyz\in L$ for all  $w,z \in\Sig^*$.
This relation is also known as the \emph{syntactic congruence} of $L$.
The quotient set $\Sig^+/ \lraL$ of equivalence classes of the relation $\lraL$ is a semigroup called the \emph{syntactic semigroup} of $L$.
If  $\cD$ is a minimal DFA for $L$, then $T_{\cD}$ is isomorphic to the syntactic semigroup $T_L$ of $L$~\cite{Pin97}, and we represent elements of $T_L$ by transformations in~$T_{\cD}$. 
The \emph{syntactic complexity} of a language is the size of its syntactic semigroup, and it has been used as a measure of complexity for regular languages~\cite{Brz13,BrYe11,HoKo04}. 

\noin{\bf Atoms:}
Atoms are defined by the congruence in which two words $x$ and $y$ are equivalent if 
 $ux\in L$ if and only if  $uy\in L$ for all $u\in \Sig^*$. 
In other words, $x$ and $y$ are equivalent if $x\in u^{-1}L$ if and only if $y\in u^{-1}L$.
 An equivalence class of this congruence is called an \emph{atom} of $L$~\cite{BrTa14}.
 Thus, an atom is a non-empty intersection of complemented and uncomplemented quotients of $L$,
written $A_S = \bigcap_{i \in S} K_i \cap \bigcap_{i \not\in S} \overline{K_i}$ for $S \subseteq Q_n$, where $K_0, K_1, \dots, K_{n-1}$ are the quotients of $L$.
The number of atoms and the complexities of the atoms were suggested as measures of complexity of regular languages~\cite{Brz13}.
For more information about atoms and their complexity, see~\cite{BrTa13,BrTa14,Iva16}.


\section{Suffix-Convex Triple Systems}

Suffix-convex languages are difficult to reason about through the common representations of regular languages.
To alleviate this, we introduce a structure called a \emph{suffix-convex triple system} (or just ``triple system'').
A triple system is a set of 3-tuples of states in $Q_n$ that satisfy some structural conditions.
For every triple system, there is a nonempty family of DFAs on the state set $Q_n$ that are said to \emph{respect} the system.
Triple systems have the following properties:
\begin{enumerate}[topsep=0pt]
\item Every DFA that respects any triple system is suffix-convex.
\item For every suffix-convex DFA, there is at least one triple system that it respects.
\item For any triple system, among the transition semigroups of DFAs that respect the system, there is a unique maximal semigroup that contains all others.
\end{enumerate}
Through properties 1 and 2, triple systems effectively characterize suffix-convex regular languages.
However, as suggested by 2, the correspondence between triple systems and suffix-convex DFAs is not a bijection; 
most suffix-convex DFAs respect a number of different triple systems, and most triple systems are respected by many different DFAs.
Property 3 helps to identify DFAs of suffix-convex languages whose transition semigroups are particularly complex, which is useful both for discovering and reasoning about complex suffix-convex languages.

The inspiration for the triple system framework lies in the following reformulation of the definition of suffix-convexity.
A regular language $L$ is suffix-convex if and only if, for all $u, v, w \in \Sigma^*$, \[w^{-1}L \cap (uvw)^{-1}L \subseteq (vw)^{-1}L.\]
This statement is more usefully expressed in terms of the states of a DFA.
Let $\cD = (Q_n, \Sigma, \delta, 0, F)$ be a DFA, and let $K_q$ denote the language accepted by $(Q_n, \Sigma, \delta, q, F)$.
Setting $p = 0w$, $q = 0uvw$, and $r = 0vw$, the statement above becomes $K_p \cap K_q \subseteq K_r$.
This relationship between quotients satisfies some nice properties: If $p, q, r, s \in Q_n$, then
\begin{itemize}[topsep=0pt]
\item $K_p \cap K_q \subseteq K_p$,
\item $K_p \cap K_q \subseteq K_r \iff K_q \cap K_p \subseteq K_r$,
\item $K_p \cap K_q \subseteq K_r$ and $K_q \cap K_r \subseteq K_s \implies K_p \cap K_q \subseteq K_s$, and
\item $K_p \cap K_q \subseteq K_r$ and $p, q \in F \implies r \in F$.
\end{itemize}

All four properties are trivial to prove, yet it turns out that they capture the essential character of suffix-convex DFAs.
We can now make a formal definition, in which these four properties appear in a more abstract way.

\begin{definition}
A \emph{suffix-convex triple system} is a tuple $\cS = (Q, q_0, F, \cR)$, where $q_0 \in Q$ is the \emph{initial} state, $F \subseteq Q$ is a set of \emph{final} states, and $\cR \subseteq Q \times Q \times Q$ is a \emph{relation} such that, for all $p, q, r, s \in Q$,
\begin{enumerate}[leftmargin=*, labelindent=\widthof{(A)}]
\item[(A)] $(p, q, p) \in \cR$,
\item[(B)] $(p, q, r) \in \cR \iff (q, p, r) \in \cR$,
\item[(C)] $(p, q, r) \in \cR$ and $(q, r, s) \in \cR \implies (p, q, s) \in \cR$, and
\item[(D)] $(p, q, r) \in \cR$ and $p, q \in F \implies r \in F$.
\end{enumerate}
\end{definition}

\begin{definition}
A DFA $\cD = (Q, \Sigma, \delta, q_0, F)$ is said to \emph{respect} a triple system $\cS = (Q, q_0, F, \cR)$ if, for all transformations $t \in T_{\cD}$ and states $p, q, r \in Q$, it satisfies both
\begin{enumerate}[leftmargin=*, labelindent=\widthof{Condition 1:}]
\item[Condition 1:] $(p, q, r) \in \cR \implies (pt, qt, rt) \in \cR$.
\item[Condition 2:] $(q_0, q, r) \in\cR \implies (q_0, qt,rt) \in \cR$.
\end{enumerate}
Also say a transformation $t \colon Q \to Q$ \emph{respects} $\cS$ if it satisfies Conditions 1 and 2 for $\cS$.
\end{definition}

We frequently refer back to these definitions. Henceforth, let (A), (B), (C), (D), Condition 1, and Condition 2 denote the properties in these two definitions.

Notice that if a DFA $\cD$ respects a triple system $\cS$, then they must have the same state set, initial state, and final states.
As shorthand, we sometimes refer to a triple system $\cS = (Q, q_0, F, \cR)$ only by $\cR$ when the other parameters are clear from context.
In particular, it suffices to say that a DFA $\cD = (Q, \Sigma, \delta, q_0, F)$ respects $\cR$, since the other pieces of the triple system must be $Q$, $q_0$, and $F$.
In all future DFAs and triple systems, we use $Q_n$ as the state set and $0$ as the initial state.

Although the motivation for the triples in $\cR$ are those satisfying $K_p \cap K_q \subseteq K_r$ in some DFA that respects the system, it is not generally the case that  $(p, q, r) \in \cR \iff K_p \cap K_q \subseteq K_r$.
We can only guarantee that $(p, q, r) \in \cR \implies K_p \cap K_q \subseteq K_r$; the proof of this is an easy exercise using (D) and Condition 1.
Let us now prove the essential properties of triple systems that we mentioned at the beginning of this section.

\begin{proposition}
\label{prop:respectimpsc}
If a DFA $\cD = (Q_n, \Sigma, \delta, 0, F)$ respects a triple system $\cR$ then $\cD$ is suffix-convex.
\end{proposition}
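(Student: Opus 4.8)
The plan is to reduce suffix-convexity to a single membership statement in $\cR$ and then invoke the implication already noted in the excerpt. By the reformulation preceding the definition of a triple system, $\cD$ is suffix-convex if and only if $K_{0w} \cap K_{0uvw} \subseteq K_{0vw}$ for all $u, v, w \in \Sigma^*$. Writing $p = 0w$, $q = 0uvw$, and $r = 0vw$, and recalling that the excerpt guarantees $(p, q, r) \in \cR \implies K_p \cap K_q \subseteq K_r$, the whole task collapses to proving the single statement $(0w, 0uvw, 0vw) \in \cR$ for arbitrary $u, v, w$.

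I would derive this triple in three steps, building up the words read into the three coordinates. First, (A) gives $(0, 0u, 0) \in \cR$, taking its free middle coordinate to be $0u$ and its repeated coordinate to be the initial state $0$. Second, applying Condition~2 with the transformation induced by $v$ moves the second and third coordinates while pinning the first at $q_0 = 0$, which yields $(0, 0uv, 0v) \in \cR$. Third, applying Condition~1 with the transformation induced by $w$ moves all three coordinates at once and produces exactly $(0w, 0uvw, 0vw) = (p, q, r)$. For non-empty words these transformations lie in $T_{\cD}$ and satisfy the conditions because $\cD$ respects $\cR$; if any of $u$, $v$, $w$ is empty the corresponding transformation is the identity, which trivially satisfies both conditions, so that step is simply omitted. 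For completeness I would also recall the quoted implication: given $z \in K_p \cap K_q$, Condition~1 applied with the transformation induced by $z$ gives $(pz, qz, rz) \in \cR$ with $pz, qz \in F$, so (D) forces $rz \in F$, i.e. $z \in K_r$.

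The subtle point, and the only real choice to get right, is the asymmetry between the two conditions. Using Condition~1 rather than Condition~2 in the middle step would send $(0, 0u, 0)$ to $(0v, 0uv, 0v)$ and ultimately to $(0vw, 0uvw, 0vw)$, whose first coordinate is wrong. It is precisely Condition~2, holding the initial state fixed, that encodes the fact that the word $w$ defining $p = 0w$ never reads the prefix $uv$, whereas that prefix is read into $q$ and $r$. Beyond selecting this order of (A), Condition~2, and Condition~1, there is no computational obstacle: the argument is the three-line derivation above followed by the one-line application of the already-established implication.
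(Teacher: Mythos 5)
Your derivation is correct and follows the paper's own proof essentially step for step: (A) gives $(0,0u,0)\in\cR$, Condition~2 with $v$ gives $(0,0uv,0v)\in\cR$, Condition~1 with $w$ gives $(0w,0uvw,0vw)\in\cR$, and (D) closes the argument. The only cosmetic difference is that you route the last step through the containment $K_p\cap K_q\subseteq K_r$ (adding one more application of Condition~1 for the suffix $z$), whereas the paper applies (D) directly to the final states $0w,0uvw\in F$; both are valid, and your handling of the empty-word case is a reasonable extra precaution.
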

\begin{proof}
Let $u,v,w \in \Sigma^*$ such that $w, uvw \in L(\cD)$.
To prove suffix-convexity, we show that $vw \in L(\cD)$.
Observe the following:
\begin{itemize}
\item $(0,0u,0) \in \cR$ by (A),
\item $(0, 0uv, 0v) \in \cR$ by Condition 2,
\item $(0w, 0uvw, 0vw) \in \cR$ by Condition 1,
\item $0w, 0uvw \in F$ since $w, uvw \in L(\cD)$,
\item $0vw \in F$ by (D).
\end{itemize}
Hence $vw \in L(\cD)$. \qed
\end{proof}

\begin{proposition}
\label{prop:scimprespect}
If a minimal DFA $\cD = (Q_n, \Sigma, \delta, 0, F)$ is suffix-convex, then it respects the triple system $(Q_n, 0 , F, \cR)$ where
\[\cR = \{(p, q, r) \mid K_p \cap K_q \subseteq K_r\}.\]
\end{proposition}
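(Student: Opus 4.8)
The plan is to establish the two things required by the definition of ``respects'': first, that $(Q_n, 0, F, \cR)$ really is a suffix-convex triple system (i.e. $\cR$ obeys (A)--(D)); and second, that $\cD$ satisfies Conditions 1 and 2 for this $\cR$. The three tools I would lean on throughout are the identity $K_{\delta(p,x)} = x^{-1}K_p$ for every state $p$ and word $x$, the monotonicity of left quotient (if $A \subseteq B$ then $x^{-1}A \subseteq x^{-1}B$), and the equivalence $r \in F \iff \eps \in K_r$.

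Verifying (A)--(D) is routine and amounts to rewriting each condition as a statement about the quotients $K_p, K_q, K_r$. Since $\cR = \{(p,q,r) \mid K_p \cap K_q \subseteq K_r\}$: (A) is $K_p \cap K_q \subseteq K_p$; (B) is commutativity of $\cap$; (C) follows because any word in $K_p \cap K_q$ already lies in $K_r$ by the first hypothesis, hence in $K_q \cap K_r \subseteq K_s$ by the second; and (D) follows since $p,q \in F$ forces $\eps \in K_p \cap K_q \subseteq K_r$, whence $r \in F$. These are precisely the four elementary facts recorded just before the definition of a triple system.

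For Condition 1 I would take $(p,q,r) \in \cR$ and a transformation $t = \delta_x$, then rewrite the three images using $K_{pt} = x^{-1}K_p$ and likewise for $q, r$. Because $x^{-1}K_p \cap x^{-1}K_q = x^{-1}(K_p \cap K_q)$, monotonicity applied to $K_p \cap K_q \subseteq K_r$ immediately delivers $K_{pt} \cap K_{qt} \subseteq K_{rt}$, i.e. $(pt, qt, rt) \in \cR$. Notably this step never uses suffix-convexity; it is a property of quotients alone.

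The real content is Condition 2, and it is here that minimality and suffix-convexity combine. Assume $(0, q, r) \in \cR$, that is $L \cap K_q \subseteq K_r$ where $L = K_0$, and let $t = \delta_x$. Minimality guarantees that $q$ is reachable, so I can write $q = 0u$ and hence $K_q = u^{-1}L$. For any $y \in L \cap K_{qt} = L \cap x^{-1}K_q$ I then have both $y \in L$ and $uxy \in L$; suffix-convexity (with $u, x, y$ in the roles of $u, v, w$) yields $xy \in L$. Thus $xy \in L \cap K_q \subseteq K_r$, so $y \in x^{-1}K_r = K_{rt}$, which proves $L \cap K_{qt} \subseteq K_{rt}$ and hence $(0, qt, rt) \in \cR$. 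I expect the only delicate point to be the bookkeeping between states and quotients — extracting the two membership facts $y \in L$ and $uxy \in L$ that trigger the suffix-convex inclusion — since once those are in hand the conclusion is forced.
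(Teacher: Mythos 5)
Your proposal is correct and follows essentially the same route as the paper's proof: the quotient identities $K_{pt} = x^{-1}K_p$ and distributivity of left quotient over intersection handle (A)--(D) and Condition 1, and Condition 2 is settled by writing $q = 0u$ (using minimality) and applying suffix-convexity to the words $u$, $x$, $y$. The only cosmetic difference is that you argue Condition 2 directly (showing $xy \in L \cap K_q \subseteq K_r$ and pulling back by $x$) where the paper runs the same ingredients as a proof by contradiction via Condition 1; both hinge on exactly the same membership facts $y \in L$ and $uxy \in L$.
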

\begin{proof}
It is easy to verify that $(Q_n, 0, F, \cR)$ is a triple system when $\cD$ is minimal.
We must check that every transformation in $T_{\cD}$ satisfies Condition 1 and Condition 2.

\noindent \textbf{Condition 1:}
Let $t \in T_{\cD}$ and suppose $(p,q,r) \in \cR$. We wish to show that $(pt,qt,rt) \in \cR$, or equivalently, $K_{pt} \cap K_{qt} \subseteq K_{rt}$.
Choose a word $w \in \Sigma^*$ that induces $t$ in $\cD$.
Since $K_p \cap K_q \subseteq K_r$, we have $w^{-1}(K_p \cap K_q) \subseteq w^{-1}K_r$.
Notice $w^{-1}(K_p \cap K_q) = w^{-1}K_p \cap w^{-1}K_q = K_{pt} \cap K_{qt}$ and $w^{-1}K_r = K_{rt}$.
Therefore $K_{pt} \cap K_{qt} \subseteq K_{rt}$.

\noindent \textbf{Condition 2:}
Let $t \in T_{\cD}$ and suppose $(0, q, r) \in \cR$.
Then $K_0 \cap K_q \subseteq K_r$, and we wish to show $K_0 \cap K_{qt} \subseteq K_{rt}$.
To a contradiction, suppose there exists a word $w \in (K_0 \cap K_{qt}) \setminus K_{rt}$.
Choose words $u, v \in\Sigma^*$ such that $0u = q$ and $v$ induces $t$ in $\cD$.
Since $w \in K_0 \cap K_{pt}$, both $w$ and $uvw$ must be in $L$.
But by Condition 1, $K_{0t} \cap K_{qt} \subseteq K_{rt}$, and  since $w \in K_{qt}$ and $w \not\in K_{rt}$, it follows that $w \not\in K_{0t}$.
As $K_{0t} = K_{0v} = v^{-1}L$, we have $vw \not\in L$,  contradicting the suffix-convexity of $L(\cD)$.
\qed
\end{proof}

\begin{proposition}
\label{prop:realizability}
Let $\cS = (Q_n, 0, F, \cR)$ be a triple system and define
\[T^* \coloneqq \{t \colon Q_n \to Q_n \mid \text{$t$ respects $\cS$}\}.\]
If a DFA $\cD$ respects $\cS$ then $T_{\cD} \subseteq T^*$. Moreover, there is a DFA $\cD'$ respecting $\cS$ with $T_{\cD'} = T^*$.
\end{proposition}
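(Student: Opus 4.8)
The first claim is essentially definitional: a DFA $\cD$ respects $\cS$ precisely when every transformation in $T_{\cD}$ satisfies Conditions 1 and 2, and $T^*$ is by definition the set of \emph{all} transformations of $Q_n$ satisfying these two conditions. Hence $T_{\cD} \subseteq T^*$ is immediate, and I would dispatch it in a single sentence. The substance of the proposition lies in the ``moreover'' part: exhibiting one DFA whose transition semigroup realizes all of $T^*$ simultaneously.

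The plan is to build $\cD'$ with a dedicated letter for each transformation in $T^*$. Concretely, I would set $\cD' = (Q_n, T^*, \delta', 0, F)$ where $\delta'(q, t) = qt$ for every letter $t \in T^*$ and every state $q \in Q_n$. Since $Q_n$ is finite there are only finitely many transformations of $Q_n$, so $T^*$ is a finite alphabet; it is also nonempty because $\one$ trivially satisfies Conditions 1 and 2 and so lies in $T^*$. Thus $\cD'$ is a legitimate DFA. Each single letter $t$ induces exactly the transformation $t$, which gives the easy inclusion $T^* \subseteq T_{\cD'}$.

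The main step, and the only one requiring genuine argument, is to show that $T^*$ is closed under composition; this simultaneously yields the reverse inclusion $T_{\cD'} \subseteq T^*$ and shows that $\cD'$ respects $\cS$. I would prove directly that if $s, t \in T^*$ then $st \in T^*$ by verifying Conditions 1 and 2 for $st$. For Condition 1, starting from $(p,q,r) \in \cR$, applying Condition 1 for $s$ gives $(ps, qs, rs) \in \cR$, and then applying Condition 1 for $t$ gives $(pst, qst, rst) \in \cR$, which is exactly $(p(st), q(st), r(st)) \in \cR$ by associativity of composition. Condition 2 for $st$ follows by the same two-step chaining, applying Condition 2 for $s$ and then for $t$ (here the initial coordinate stays fixed at $q_0 = 0$ at each step). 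With closure in hand, every transformation induced by a nonempty word of $\cD'$ is a composition of letters drawn from $T^*$ and hence lies in $T^*$, so $T_{\cD'} = T^*$; and since every element of $T_{\cD'} = T^*$ satisfies Conditions 1 and 2 by definition, $\cD'$ respects $\cS$.

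I do not expect a serious obstacle: the closure argument is a routine twofold application of the defining conditions. The only points meriting a line of care are that the alphabet $T^*$ is finite and nonempty, so that $\cD'$ is a well-formed DFA, and the bookkeeping that $(ps)t = p(st)$ so the chained memberships land on the transformation $st$ as required.
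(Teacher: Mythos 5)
Your proposal is correct and follows essentially the same route as the paper: dismiss the first claim as definitional, then build $\cD'$ with one dedicated letter per transformation in $T^*$ and conclude $T_{\cD'} = T^*$ from closure of $T^*$ under composition. The only difference is that you explicitly verify the closure (chaining Conditions 1 and 2 through $s$ and then $t$), whereas the paper simply asserts that $T^*$ is a semigroup; your extra detail is accurate and harmless.
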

\begin{proof}
The first claim is obvious, since every transformation in $T_{\cD}$ must respect $\cS$.
For the second claim, we may simply choose $\cD' = (Q_n, \Sigma, \delta, 0 F)$ where $\Sigma$ and $\delta$ are defined by
$\Sigma = \{a_t \mid t \in T^*\}$ and $\delta(p, a_t) = pt$ for all $t \in T^*$.
Since $T^*$ is a semigroup under composition, $T_{\cD'} = T^*$.
\qed
\end{proof} 


While a triple system gives a ternary relation between states, it also yields an interesting binary relation that is very useful in describing triple systems and reasoning about them.
As suggested by the asymmetry in Condition 2, the initial state $q_0$ plays a special role in a triple system. 

\begin{definition}
Given a triple system $\cS = (Q, q_0, F, \cR)$, define $\tle_R$, a binary relation on $Q$, by 
\[p \tle_\cR q \iff (q_0, p, q) \in \cR.\]
\end{definition}
When $\cR$ is clear from context, we will simply write $\tle$ instead of $\tle_\cR$.
It turns out that $\tle$ is a kind of order relation called a \emph{preorder} (also called a \emph{quasiorder}).

\begin{proposition}
For any triple system $\cS = (Q_n, 0, F, \cR)$, $\tle_\cR$ is a preorder on $Q_n$; that is, it satisfies
\begin{enumerate}
\item $p \tle_\cR p$, \hfill(Reflexivity)
\item $p \tle_\cR q$ and $q \tle_\cR r \implies p \tle_\cR r$.\hfill(Transitivity)
\end{enumerate}
\end{proposition}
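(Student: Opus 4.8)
The plan is to unfold the definition of $\tle_\cR$ and reduce both claims to direct applications of the axioms (A)--(C), since $p \tle_\cR q$ is by definition just the single membership statement $(0, p, q) \in \cR$. No properties of the DFA, of axiom (D), or of the respecting conditions should be needed; everything lives at the level of the relation $\cR$.

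For reflexivity, I would first invoke (A) with the middle coordinate taken to be the initial state $0$, giving $(p, 0, p) \in \cR$. A single application of the symmetry axiom (B) to swap the first two coordinates then yields $(0, p, p) \in \cR$, which is precisely $p \tle_\cR p$.

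The transitivity argument is the part that takes a moment of thought, and its only difficulty is matching the two hypotheses to the rigid shape of axiom (C). We start from $(0, p, q) \in \cR$ and $(0, q, r) \in \cR$ and want to conclude $(0, p, r) \in \cR$. The obstacle is that (C) chains triples of the form $(p,q,r)$ and $(q,r,s)$ by overlapping the last two coordinates of the first with the first two of the second, whereas our two triples both begin with $0$ and therefore do not overlap in the required way. The key step is to apply (B) to the first hypothesis to obtain $(p, 0, q) \in \cR$, relocating the initial state to the middle coordinate. Now the pair $(p, 0, q)$ and $(0, q, r)$ fits the template of (C) exactly, with $a = p$, $b = 0$, $c = q$, and $d = r$, so (C) gives $(p, 0, r) \in \cR$. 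A final use of (B) produces $(0, p, r) \in \cR$, that is, $p \tle_\cR r$.

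I expect the only real insight required is spotting this use of (B) to move the initial state into the middle coordinate so that (C) becomes applicable; once that is seen, both verifications are one-line symbol manipulations. It is worth noting as a sanity check that the same conclusion is immediate at the semantic level --- reading $(0,p,q)\in\cR$ as $L \cap K_p \subseteq K_q$ makes transitivity follow from intersecting with $L$ --- but the intended proof should stay purely within the abstract axioms, so I would present only the $\cR$-level derivation above.
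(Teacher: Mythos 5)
Your proof is correct and follows essentially the same route as the paper: reflexivity via (A) with the middle coordinate set to $0$ followed by (B), and transitivity by using (B) to convert $(0,p,q)$ into $(p,0,q)$ so that (C) can be chained with $(0,q,r)$ to give $(p,0,r)$, then (B) again. The instantiation of (C) with $a=p$, $b=0$, $c=q$, $d=r$ is exactly the step the paper performs.
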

\begin{proof}
Reflexivity follows by (A) and (B). To prove transitivity, suppose $p \tle_\cR q$ and $q \tle_\cR r$. Then
\begin{itemize}[topsep=0pt]
\item $(0, p, q) \in \cR$ by assumption,
\item $(p, 0, q) \in \cR$ by (B),
\item $(0, q, r) \in \cR$ by assumption,
\item $(p, 0, r) \in \cR$ by (C),
\item $(0, p, r) \in \cR$ by (B).
\end{itemize}
Hence $p \tle_\cR r$.\qed
\end{proof}
A preorder is similar to a partial order, except that it does not require the antisymmetry property ($p \tle q$ and $q \tle p \implies p = q$).
It is not true that $\tle$ is always a partial order, since there may be states $p$ and $q$ where $p \tle q$ and $q \tle p$, but $p \not= q$;
such elements are called \emph{symmetric} and we write $p \sim q$. We also write $p \tl q$ to indicate $p \tle q$ but $q \not\tle p$.

We will find $\tle$ useful because triple systems can be complicated and varied, whereas $\tle$ has a more restricted structure.
Besides being a preorder, $\tle$ has the interesting property that $p \tle 0$ for all $p \in Q_n$ (since $(0, p, 0) \in \cR$ for all $p \in Q_n$ by (A)).
Thus, $0$ is always a maximum element of $\tle$. Note that there could be other elements, symmetric with 0, which are also maximum elements with respect to $\tle$.

The most pleasing feature of $\tle$ is that it gives us an intuitive way of restating Condition 2.
\begin{enumerate}[leftmargin=*, labelindent=\widthof{Condition 2:}]
\item[\emph{Condition 2:}] \emph{$t$ is monotone with respect to $\tle$.}
\end{enumerate}
In this context, $t$ being \emph{monotone} means that $p \tle q \implies pt \tle qt$.
We frequently use this property and the structure of $\tle$ as an entry point to reasoning about triple systems.
It is sometimes sufficient to consider only $\tle$ in proofs, ignoring the finer details of the triple system entirely.
As demonstrated by the next theorem, when $\tle$ is a partial order we can effectively ignore the rest of the triple system because every monotone transformation can be included in the transition semigroup without breaking suffix-convexity.
Since it is never harmful to have a larger semigroup for proving complexity properties, the cases where $\tle$ is a partial order are the simplest and most natural.

\begin{theorem}
\label{thm:mono}
Fix any partial order $\preceq$ on $Q_n$ in which $0$ is the maximum element.
Let $f \in Q_n$ and consider the triple system $\cS = (Q_n, 0, \{f\}, \cR)$, where
\[\cR = \{(p, q, r) \mid p \preceq r \preceq q \text{ or } q \preceq r \preceq p\}.\]
There exists a minimal suffix-convex DFA $\cD = (Q_n, \Sigma, \delta, 0, F)$ respecting $\cS$ such that
\[T_{\cD} = \{t \colon Q_n \to Q_n \mid \text{ $t$ is monotone with respect to $\preceq$}\}.\]
Furthermore, $\tle_\cR = \preceq$, i.e. $p \tle_\cR q$ if and only if $p \preceq q$.
\end{theorem}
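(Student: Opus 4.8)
The plan is to identify the set $T^* = \{t \colon Q_n \to Q_n \mid t \text{ respects } \cS\}$ with the set of all $\preceq$-monotone transformations, and then read off the three assertions from the general machinery. First I would confirm that $\cS$ is a triple system by checking (A)--(D): (B) is immediate from the symmetric form of the betweenness condition; (D) holds because $p,q \in F$ forces $p = q = f$, so $(f,f,r) \in \cR$ gives $f \preceq r \preceq f$ and hence $r = f$; (C) follows from transitivity and antisymmetry of $\preceq$ by a short case analysis on which betweenness alternative holds in each hypothesis; and (A), namely $(p,q,p) \in \cR$, reduces by reflexivity to the $\preceq$-comparability of $p$ and $q$. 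With $\cS$ a triple system, Propositions~\ref{prop:realizability} and~\ref{prop:respectimpsc} will supply a realizing DFA and suffix-convexity, so the real content is the identification of $T^*$ and the minimality of the realizing DFA.

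For the identification I would first compute $\tle_\cR$. Since $0$ is the maximum, $(0,p,q) \in \cR$ unwinds to $0 \preceq q \preceq p$ or $p \preceq q \preceq 0$; the first forces $p = q = 0$ and the second reduces to $p \preceq q$, so $p \tle_\cR q \iff p \preceq q$, which is the ``Furthermore'' claim $\tle_\cR = \preceq$. By the restatement of Condition~2 in terms of $\tle_\cR$, a transformation satisfies Condition~2 exactly when it is $\preceq$-monotone. It then remains to see that Condition~1 adds nothing: if $t$ is monotone and $(p,q,r) \in \cR$, then by (B) we may assume $p \preceq r \preceq q$, whence $pt \preceq rt \preceq qt$ and so $(pt,qt,rt) \in \cR$. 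Thus monotonicity implies both conditions, while any $t \in T^*$ satisfies Condition~2 and is therefore monotone; hence $T^*$ is precisely the set of $\preceq$-monotone transformations. Proposition~\ref{prop:realizability} now yields a DFA $\cD$ respecting $\cS$ with $T_\cD = T^*$, and Proposition~\ref{prop:respectimpsc} makes $\cD$ suffix-convex.

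It remains to show $\cD$ is minimal. For reachability, every constant map $c_p \colon x \mapsto p$ is monotone, hence lies in $T^*$, and $0c_p = p$, so each state is reachable from $0$. For distinguishability, take $p \neq q$; by antisymmetry we may assume $p \not\preceq q$, so the principal up-set $U_p = \{x \mid p \preceq x\}$ is upward closed, contains $p$ and $0$, but not $q$. If $f$ is not $\preceq$-minimal, fix $g \prec f$ and let $t$ send $U_p$ to $f$ and its complement to $g$; this is monotone (all of $U_p$ maps above all of the complement), with $pt = f \in F$ and $qt = g \notin F$. If $f$ is $\preceq$-minimal, use the dual map sending $U_p$ to some $h \succ f$ and the complement to $f$ (such $h$ exists whenever $n \ge 2$), giving $pt = h \notin F$ and $qt = f \in F$. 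Either way $p$ and $q$ are distinguished, so $\cD$ is minimal; the case $n=1$ is trivial.

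The clean heart of the argument is the collapse of Condition~1 into monotonicity, and reachability and suffix-convexity are immediate from the cited results. The steps needing genuine care are the verification of (C), where the order-theoretic bookkeeping across the two betweenness alternatives must be handled with transitivity and antisymmetry, and --- the main obstacle --- the distinguishability argument, where one must produce a monotone separating map for every pair of states while accommodating the position of $f$; the $\preceq$-minimal case is exactly what forces the dual threshold construction.
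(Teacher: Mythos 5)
Your overall route is the paper's: identify the transformations respecting $\cS$ with the $\preceq$-monotone ones (Condition~2 becomes monotonicity once $\tle_\cR = \preceq$ is established, and Condition~1 follows from monotonicity via (B)), realize that semigroup by a DFA with one letter per monotone transformation, and prove minimality using constant maps for reachability and a two-valued monotone map for distinguishability. Routing the construction through Proposition~\ref{prop:realizability} rather than building $\cD$ directly is a cosmetic difference. Your distinguishability argument is in fact more careful than the paper's: the paper's separating map sends everything outside the down-set of $p$ to $0$ and so fails to distinguish anything in the edge case $f=0$, whereas your case split on whether $f$ is $\preceq$-minimal covers all placements of~$f$.

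The one step that does not go through as written is your verification of (A). You correctly reduce $(p,q,p)\in\cR$ to the $\preceq$-comparability of $p$ and $q$, but then treat the check as complete; comparability is not implied by the hypotheses, since $\preceq$ is only assumed to be a partial order with maximum $0$. For two incomparable non-zero states $p,q$ --- which occur in the paper's own later application of this theorem, Definition~\ref{def:revsystem} --- the displayed $\cR$ contains no triple $(p,q,p)$, so it is not literally a triple system and Proposition~\ref{prop:realizability} cannot be invoked. (The paper's proof hides the same issue behind ``it is easy to check.'') The repair is to augment $\cR$ with all triples $(p,q,p)$ and $(p,q,q)$, exactly as Definition~\ref{def:revsystem} does explicitly: the augmented relation still satisfies (B), (C), (D), the added triples are carried to triples of the same form by any transformation (so Conditions~1 and~2 and the set of respecting transformations are unchanged), and $\tle_\cR$ is unaffected. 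With that amendment the rest of your argument, and the theorem's conclusions, survive verbatim; you should state the correction rather than leave the comparability requirement dangling.
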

\begin{proof}
It is easy to check that $\cS$ satisfies (A), (B), (C), and (D).
By construction, $(0, p, q) \in \cR$ if and only if $p \preceq q \preceq 0$. S
ince $0$ is the maximum element in $\preceq$, this implies $\tle_\cR = \preceq$.
We construct a minimal DFA respecting $\cS$ with every monotone function in its transition semigroup.

Let $\cM$ denote the set of monotone transformations on $Q_n$ with respect to $\preceq$.
Since monotonicity is preserved under composition, $\cM$ is a semigroup under composition.
Let $\cD = (Q_n, \Sigma, \delta, 0, \{f\})$ where $\Sigma = \{a_t \mid t \in \cM\}$ and $\delta(p, a_t) = pt$ for all $t \in \cM$ (in other words, include a dedicated letter in $\Sigma$ for each monotone transformation). Clearly $T_{\cD} = \cM$.

It is easy to show that $\cD$ is minimal: State $p$ is reached by the transformation $(Q_n \to p)$, and two states $p$ and $q$, $q \not\preceq p$, are distinguished by the monotone transformation $t$ defined by \[rt = \begin{cases} f \quad&\text{if } r \preceq p, \text{ and}\\ 0 &\text{otherwise.}\end{cases}\]

To prove suffix-convexity, we show that every transformation in $\cM$ respects $\cS$. Condition 2 is trivial, since $\preceq = \tle_\cR$.
For Condition 1, observe that if $p \preceq r \preceq q$ then $pt \preceq rt \preceq qt$ for all $t \in \cM$.
Hence $(p, q, r) \in \cR$ implies $(pt, qt, rt) \in \cR$ for all $t \in \cM$.\qed
\end{proof}
\begin{remark}
The set of final states in Theorem 1 need not be a singleton. We only require that $\emptyset \subsetneq F \subsetneq Q_n$ and that $F$ is \emph{convex} with respect to $\preceq$; that is, there cannot be states $f \preceq g \preceq h$ where $f,h \in F$ and $g \not\in F$.
\end{remark}


\section{Star, Product, and Boolean Operations}
\label{sec:star}
This section has our first application of the triple system framework.
We present a proper suffix-convex witness stream $(L_n \mid n \ge 3)$ that meets the regular language upper bound for (Kleene) star.
With a dialect stream, it also meets the regular language upper bound for product and boolean operations.
Upper bounds for all of these operations on regular languages are well known (e.g. \cite{Brz13,Yu01}): If $L'$ and $L$ are regular languages of complexity $m$ and $n$, respectively, then $\kappa(L^*) \le 2^{n-1} + 2^{n-2}$, $\kappa(L'L) \le (m-1)2^n + 2^{n-1}$, and $\kappa(L' \circ L) \le mn$ for $\circ \in \{\cup, \oplus, \setminus, \cap\}$.

The witness DFA we introduce respects a triple system such that $\tle_\cR$ is a total order on $Q_n$.
We define the triple system such that $0 \tg 1 \tg \cdots \tg n-2 \tg n-1$.

\begin{definition}
\label{def:starsystem}
For $n \ge  3$, define $\cS_n = (Q_n, 0, \{n-2\}, \cR_n)$ where
\[\cR_n = \{(p, q, r) \mid  p \ge r \ge q \text{ or }  q \ge r \ge p\}.\]
\end{definition}
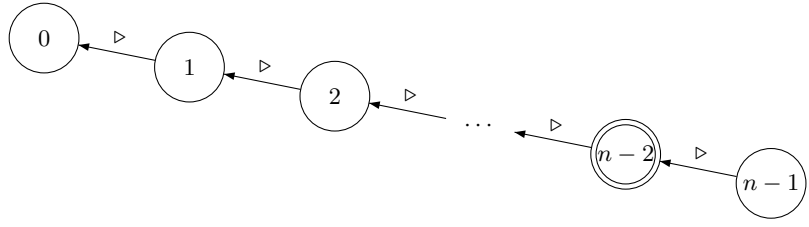
\begin{figure}
\unitlength 11pt
\begin{center}\begin{picture}(25,7)(0,-1)
\gasset{Nh=2.4,Nw=2.4,Nmr=1.2,ELdist=0.3,loopdiam=1.2}
\node(0)(0,5){$0$}
\node(1)(5,4){$1$}
\node(2)(10,3){$2$}
\node[Nframe=n](3)(15,2){$\cdots$}
\node(4)(20,1){$n-2$}\rmark(4)
\node(5)(25,0){$n-1$}

\drawedge[ELside=r](1,0){$\tg$}
\drawedge[ELside=r](2,1){$\tg$}
\drawedge[ELside=r](3,2){$\tg$}
\drawedge[ELside=r](4,3){$\tg$}
\drawedge[ELside=r](5,4){$\tg$}

\end{picture}\end{center}
\caption{The order relation $\tle_{\cR_n}$ of Definition~\ref{def:starsystem} used in the complex witness stream for star and product.}
\label{fig:starsystem}
\end{figure}

Note that $\cR_n$ is exactly the triple system from Theorem~\ref{thm:mono} if ``$\preceq$'' is replaced with ``$\ge$''.
Therefore, by Theorem 1, any monotone transformation can be included in the transition semigroup of the witness DFA without violating suffix-convexity.
For simplicity, we use a small alphabet that generates a non-maximal semigroup since it is sufficient for our purposes.

\begin{definition}
\label{def:starwitness}
For $n \ge 3$, let $L_n(\Sigma)$ be the language recognized by the DFA $\cD_n = (Q_n, \Sigma, \delta_n, 0, \{n-2\})$, where $\Sigma = \{a, b, c, d, e, f\}$ and $\delta_n$ is given by the transformations $a \colon (_0^{n-2} i \to i+1)$, $b \colon (_1^{n-1} i \to i-1)$, $c \colon (\{n-3, n-2\} \to n-1)$, $d \colon (n-2 \to n-1)$, and $e = f = \one$.
\end{definition}

\begin{figure}[h]
\unitlength 11pt
\begin{center}\begin{picture}(32,7)(-2,-1)
\gasset{Nh=2.4,Nw=2.4,Nmr=1.2,ELdist=0.3,loopdiam=1.2}
\node(0)(0,2){$0$}\imark(0)
\node(1)(4,2){$1$}
\node(2)(9,2){$2$}
\node[Nframe=n](3dots)(14,2){$\cdots$}
\node(n-3)(19,2){$n-3$}
\node(n-2)(24,2){$n-2$}\rmark(n-2)
\node(n-1)(29,2){$n-1$}

\drawedge[curvedepth=1](0,1){$a$}
\drawedge[curvedepth=1](1,2){$a$}
\drawedge[curvedepth=1](2,3dots){$a$}
\drawedge[curvedepth=1](3dots,n-3){$a$}
\drawedge[curvedepth=1](n-3,n-2){$a$}
\drawedge[curvedepth=1](n-2,n-1){$a, c, d$}

\drawedge[ELside=r,curvedepth=1](1,0){$b$}
\drawedge[ELside=r,curvedepth=1](2,1){$b$}
\drawedge[ELside=r,curvedepth=1](3dots,2){$b$}
\drawedge[ELside=r,curvedepth=1](n-3,3dots){$b$}
\drawedge[ELside=r,curvedepth=1](n-2,n-3){$b$}
\drawedge[ELside=r,curvedepth=1](n-1,n-2){$b$}

\drawedge[curvedepth=-3](n-3,n-1){$c$}
\drawloop(0){$b,c,d,e,f$}
\drawloop(1){$c,d,e,f$}
\drawloop(2){$c,d,e,f$}
\drawloop(n-3){$d,e,f$}
\drawloop(n-2){$e,f$}
\drawloop(n-1){$a,c,d,e,f$}
\end{picture}\end{center}
\caption{DFA $\cD_n$ of Definition~\ref{def:starwitness}.}
\label{fig:starwitness}
\end{figure}

\begin{proposition}
\label{prop:starproper}
For $n \ge 3$, $L_n(\Sigma)$ of Definition~\ref{def:starwitness} is proper and $\kappa(L_n) = n$.
\end{proposition}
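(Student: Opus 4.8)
The plan is to verify two essentially independent assertions: that $\cD_n$ is a minimal DFA (which gives $\kappa(L_n)=n$), and that $L_n$ is suffix-convex yet avoids each of the three excluded subclasses (which gives properness). The suffix-convexity comes almost for free from the framework. Since $\cR_n$ is exactly the triple system of Theorem~\ref{thm:mono} with $\preceq$ taken to be $\ge$, the induced order $\tle_{\cR_n}$ is the total order $0\tg 1\tg\cdots\tg n-1$, so by Theorem~\ref{thm:mono} it suffices to check that each of the six generators is monotone with respect to $\ge$. A glance at the transformations confirms this: $a$, $b$, $c$, $d$ each send $i$ to a value non-decreasing in $i$ (for instance $a$ sends $0,1,\dots,n-2,n-1$ to $1,2,\dots,n-1,n-1$), and $e=f=\one$ are trivially monotone. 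As the monotone transformations are closed under composition (Theorem~\ref{thm:mono}, Proposition~\ref{prop:realizability}), all of $T_{\cD_n}$ respects $\cS_n$, so $\cD_n$ is suffix-convex by Proposition~\ref{prop:respectimpsc}.

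For minimality I would treat reachability and distinguishability separately. Every state is reachable from $0$ by a power of $a$, since $0\xrightarrow{a}1\xrightarrow{a}\cdots\xrightarrow{a}n-1$. For distinguishability, note $n-2$ is the unique final state, so $\eps$ separates $n-2$ from every other state. For two states $p<q\le n-2$, the word $a^{\,n-2-q}$ sends $q$ to $n-2$ while sending $p$ to $n-2-(q-p)<n-2$, so it is accepted from $q$ but not from $p$. Finally, the single letter $b$ sends $n-1$ to $n-2$ but sends any $p\le n-3$ to a non-final state, which separates $n-1$ from all smaller states. Thus all $n$ states are pairwise distinguishable, $\cD_n$ is minimal, and $\kappa(L_n)=n$.

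Properness then reduces to exhibiting explicit witnesses against the three subclasses, all built from the accepting word $a^{\,n-2}\in L_n$ (which drives $0$ to $n-2$). For \emph{not a left ideal}, the word $a\cdot a^{\,n-2}=a^{\,n-1}$ drives $0$ to $n-1\notin F$, so $a^{\,n-2}\in L_n$ while $a^{\,n-1}\notin L_n$, violating $L_n=\Sigma^*L_n$. For \emph{not suffix-closed}, the suffix $a^{\,n-3}$ of $a^{\,n-2}$ drives $0$ to $n-3\ne n-2$, hence lies outside $L_n$. For \emph{not suffix-free}, since $b$ fixes $0$ we have $b\,a^{\,n-2}\in L_n$ as well, and $a^{\,n-2}$ is a proper suffix of $b\,a^{\,n-2}$, so two words of $L_n$ stand in the suffix relation. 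I expect the only delicate points to be bookkeeping at the boundary between states $n-2$ and $n-1$ — in particular the edge case $n=3$, where the suffix-closure witness $a^{\,n-3}$ degenerates to $\eps$ (still not in $L_n$, since $0\ne n-2$) — and confirming that the ``extra'' minimum state $n-1$, on which $a$ loops, is correctly separated from the others in the distinguishability step.
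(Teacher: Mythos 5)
Your proposal is correct and follows essentially the same route as the paper's proof: suffix-convexity via monotonicity of the generators with respect to the total order of Definition~\ref{def:starsystem} and Theorem~\ref{thm:mono}, minimality via the letter $a$ (the paper simply notes $\cD_n(a)$ is minimal, which your reachability/distinguishability argument spells out), and the same witnesses $a^{n-2}$, $a^{n-1}$, and $ba^{n-2}$ against the three subclasses. The only cosmetic difference is your non-suffix-closure witness $a^{n-3}$ where the paper uses $\eps\notin L_n$; both are valid.
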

\begin{proof}
DFA $\cD_n(\Sigma)$ is minimal since $\cD_n(a)$ is minimal, and hence $\kappa(L_n) = n$.
By Theorem~\ref{thm:mono}, $L_n$ must be suffix-convex because the transformations induced by letters in $\Sigma$ are monotone with respect to the partial order $\tle_{\cS_n}$ of Definition~\ref{def:starsystem}.
It cannot be a left ideal  because $a^{n-2} \in L_n$ but $a^{n-1} \not\in L$, and hence $L_n \not= \Sigma^*L_n$.
It is not suffix-closed because $\eps \not\in L_n$.
Finally, it is not suffix-free because $ba^{n-2} \in L_n$ and $a^{n-2} \in L_n$. Thus $L_n$ is a proper language.\qed
\end{proof}

\begin{theorem}
\label{thm:star}
The language stream $(L_n(a, b, c , d) \mid n \ge 3)$ of Definition~\ref{def:starwitness} meets the upper bound for star.
That is, for $n \ge 3$, $\kappa(L_n^*) = 2^{n-1} + 2^{n-2}$.
\end{theorem}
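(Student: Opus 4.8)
The plan is to take the general regular-language bound $\kappa(L^*) \le 2^{n-1} + 2^{n-2}$ (stated above, from \cite{Brz13,Yu01}) as given and prove the matching lower bound by analyzing the subset automaton of a standard $\eps$-NFA for $L_n^*$. Since the unique final state $n-2$ differs from the initial state $0$ for $n \ge 3$, and $\eps \notin L_n$, the construction is the clean one: retain the transitions of $\cD_n$, add a restart transition to $0$ on every transition that enters $n-2$, and adjoin a new accepting initial state $0'$ that mimics $0$. I would then work directly with the induced subset transition, which sends a subset $S \subseteq Q_n$ to its image $Sx$ augmented with the state $0$ whenever $n-2 \in Sx$, where a subset is accepting precisely when it contains $n-2$ (together with the state $0'$, which accepts $\eps$ and is the only state reachable by $\eps$).

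First I would pin down the candidate reachable subsets. The restart forces the invariant that every subset reached by a nonempty word satisfies $n-2 \in S \implies 0 \in S$: the only way $n-2$ enters a subset is through a transition into the final state, which simultaneously inserts $0$. Moreover $\emptyset$ is unreachable, since no nonempty subset maps to $\emptyset$ under any letter. Hence the reachable states are confined to $0'$ together with the nonempty subsets $S \subseteq Q_n$ satisfying $0 \in S$ or $n-2 \notin S$. Counting these gives $2^{n-1}$ subsets containing $0$, plus $2^{n-2}-1$ nonempty subsets avoiding both $0$ and $n-2$, plus the initial state $0'$, for a total of exactly $2^{n-1} + 2^{n-2}$.

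The substance of the proof is then reachability and distinguishability. For reachability I would show every candidate subset is actually reached: starting from $\{0'\}b = \{0\}$, I would use the near-cyclic letters $a$ and $b$ to rotate a subset through $Q_n$, exploit the restart triggered by $a$ from state $n-3$ to inject $0$ and thereby enlarge subsets, and use $c$ and $d$ to collapse chosen states onto $n-1$ in order to delete them; I expect to argue by induction on $|S|$ that every target subset is obtained from smaller ones. For distinguishability, I would construct for each state $q$ a word $w_q$ that carries $q$ to the final state $n-2$ while keeping the other relevant states away from it, so that membership of $q$ in a subset is detected by acceptance after $w_q$; two distinct candidate subsets then differ at some $q$ and are separated by $w_q$. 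The initial state $0'$ is distinguished from any non-accepting subset by $\eps$, and from any accepting subset $S$ by a nonempty word that routes some state of $S \setminus \{0\}$ to $n-2$ while keeping $0$ away from it, using that $0'$ behaves exactly like $\{0\}$ on nonempty words.

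The main obstacle I anticipate is the reachability argument. Unlike clean cyclic witnesses, here the ``add $0$'' operation is coupled to reaching $n-2$ through the restart, so the inductive construction of an arbitrary target subset must carefully sequence rotations, restarts, and deletions, and one must verify both that every subset in the family is genuinely produced and that no subset violating the invariant escapes. This bookkeeping is the delicate part on which the entire count of $2^{n-1}+2^{n-2}$ rests.
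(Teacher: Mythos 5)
Your setup coincides with the paper's: the same $\eps$-NFA/restart construction, the same invariant ($n-2 \in S \implies 0 \in S$ for every subset reached by a nonempty word), the same candidate family $\{0'\}$ together with the nonempty subsets satisfying that invariant, the same count $2^{n-1}+2^{n-2}$, and essentially the same distinguishability scheme (the paper uses $a^{n-2-p}$ to send $p$ alone to $n-2$, and $b$ when $p = n-1$; $\{0'\}$ is separated by $\eps$ or $ab$). What is missing is the lower bound itself: you correctly identify the reachability induction as the crux and then defer it, so as written you have only bounded the reachable subsets from above. Moreover, the one concrete mechanism you do sketch for that induction --- deleting states by collapsing them onto $n-1$ via $c$ and $d$ --- is not how the argument goes through: only $n-3$ and $n-2$ can be sent to $n-1$, and every letter is order-preserving on the chain $0 < 1 < \cdots < n-1$, so steering an arbitrary chosen state up to position $n-2$ drags all larger states along and repeatedly triggers the restart. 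The paper instead deletes by merging at the bottom: it first reaches every subset containing both $0$ and $n-2$ (growing by one element at a time using $a^{p_1+1}b$, where the pass through $n-2$ injects $0$), and then obtains a target $\{p_1,\dots,p_k\}$ with $p_k < n-2$ from such a superset by applying $b^{p_1+(n-2-p_k)}a^{p_1}$, which merges the surplus state into $0$ and shifts back up without touching $n-2$; the letters $c$ and $d$ serve only to place $n-1$ into a subset (and to reach $\{n-1\}$ from $\{n-3\}$). So the approach is the right one, but the step you flag as delicate is exactly where the proof lives, and your proposed tactic for it would need to be replaced by the merge-at-$0$ scheme.
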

\begin{proof}
We use the usual $\eps$-NFA construction for star.
Beginning with $\cD_n(a, b, c, d)$, create a new state called $0'$ which is final and has the same transitions as $0$. Set $0'$ to be the only initial state of the automaton, and add an $\eps$-transition from $n-2$ to $0$. The resulting automaton recognizes $L_n(a, b, c, d)^*$.
Applying the subset construction to this $\eps$-NFA, we show that there are $2^{n-1} + 2^{n-2}$ reachable and distinguishable subsets of $Q_n \cup \{0'\}$.
These subsets are $\{\{0'\}\} \cup \{P \subseteq Q_n \mid P \not= \emptyset \text{ and } n-2 \in P \implies 0 \in P\}$.

\begin{figure}[h]
\unitlength 11pt
\begin{center}\begin{picture}(32,8)(-2,-1.5)
\gasset{Nh=2.4,Nw=2.4,Nmr=1.2,ELdist=0.3,loopdiam=1.2}
\node(0')(0,6){$0'$}\imark(0')\rmark(0')
\node(0)(0,2){$0$}
\node(1)(4,2){$1$}
\node(2)(9,2){$2$}
\node[Nframe=n](3dots)(14,2){$\cdots$}
\node(n-3)(19,2){$n-3$}
\node(n-2)(24,2){$n-2$}\rmark(n-2)
\node(n-1)(29,2){$n-1$}

\drawedge[curvedepth=1](0',1){$a$}
\drawedge[ELside=r](0',0){$b,c,d$}

\drawedge[curvedepth=1](0,1){$a$}
\drawedge[curvedepth=1](1,2){$a$}
\drawedge[curvedepth=1](2,3dots){$a$}
\drawedge[curvedepth=1](3dots,n-3){$a$}
\drawedge[curvedepth=1](n-3,n-2){$a$}
\drawedge[curvedepth=1](n-2,n-1){$a, c, d$}

\drawedge[ELside=r,curvedepth=1](1,0){$b$}
\drawedge[ELside=r,curvedepth=1](2,1){$b$}
\drawedge[ELside=r,curvedepth=1](3dots,2){$b$}
\drawedge[ELside=r,curvedepth=1](n-3,3dots){$b$}
\drawedge[ELside=r,curvedepth=1](n-2,n-3){$b$}
\drawedge[ELside=r,curvedepth=1](n-1,n-2){$b$}

\drawedge[ELside=r,curvedepth=-3](n-3,n-1){$c$}
\drawedge[eyo=-0.8,exo=-0.05,ELside=r,curvedepth=4](n-2,0){$\eps$}
\end{picture}\end{center}
\caption{The $\eps$-NFA recognizing $L_n(a, b, c, d)^*$; missing transitions are self-loops.}
\label{fig:starnfa}
\end{figure}
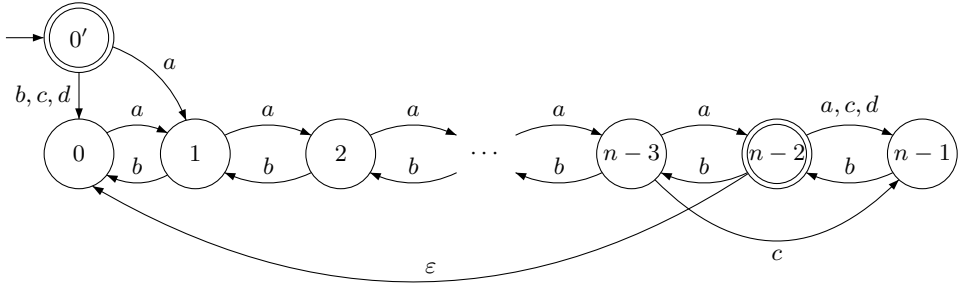

Obviously $\{0'\}$ is reachable because it is initial.
State $\{p\}$ for $0 \le p \le n-3$ is reachable from $\{0'\}$ by $ba^p$, and state $\{n-1\}$ is reached from $\{n-3\}$ by $c$.
State $\{0, n-2\}$ is reachable by $a^{n-2}$.
State $\{0, p_1, p_2, \dots, p_k, n-2\}$ with $0 < p_1 < p_2 < \cdots < p_k < n-2$ is reached from state $\{0, p_2-p_1, p_2, \dots, p_{k-1}-p_1, p_k - p_1, n-2\}$ by the word $a^{p_1+1}b$.
By induction, we can reach every state $P$ with $0, n-2 \in P$ and $n-1 \not\in P$.
We can then reach $P \cup \{n-1\}$: If $P = \{0, p_1, p_2, \dots, p_k, n-2\}$, then $P \cup \{n-1\}$ is reached from $\{0, p_1-1, p_2-1, \dots, p_k-1, n-3, n-2\}$ by $a$.

It remains to show reachability for states not containing $n-2$.
The state $\{p_1, p_2, \dots, p_k\}$ with $0 \le p_1 < p_2 < \cdots < p_k < n-2$ is reached from the state $\{0, p_1 + (n-2 - p_k), p_2  + (n-2 - p_k), \dots, p_{k-1} + (n-2 - p_k), n-2\}$ by $b^{p_1 + (n-2 - p_k)}a^{p_1}$.
Finally, state $\{p_1, p_2, \dots, p_k, n-1\}$ with $0 \le p_1 < p_2 < \cdots < p_k < n-2$ is reached from $\{0, p_2 - p_1, p_3 - p_1, \dots, p_k - p_1, n-2\}$ by $da^{p_1}$.
Thus, all $2^{n-1} + 2^{n-2}$ states are reachable.

State $\{0'\}$ can be distinguished from any state $P \subseteq Q_n$ by $\eps$ if $n-2 \not\in Q_n$, or by $ab$ if $n-2 \in Q_n$.
Any two states $P_1, P_2 \subseteq Q_n$ with $p \in P_1 \setminus P_2$ may be distinguished by $a^{n-2-p}$ if $p \not= n-1$, or by $b$ if $p = n-1$.
Hence, all states are pairwise distinguishable. Therefore $\kappa(L_n(a, b, c, d)^*) = 2^{n-1} + 2^{n-2}$; this is maximal for regular languages. \qed
\end{proof}

One may wonder what is required of a suffix-convex language to meet the bound for star.
It turns out that the triple system it respects must be somewhat similar to that of Definition~\ref{def:starsystem}.

\begin{lemma}
\label{lem:starreq}
Suppose $L$ is a suffix-convex language with $\kappa(L) = n \ge 3$ and $\kappa(L^*) = 2^{n-1} + 2^{n-2}$.
Let $\cD = (Q_n, \Sigma, \delta, 0, F)$ be a minimal DFA for $L$ and assume that $\cD$ respects a triple system $\cR$.
Then $\tle_\cR$ must admit a comparison between every pair of states in $Q_n$, i.e. for all  $p, q \in Q_n$, either $p \tle q$ or $q \tle p$.
\end{lemma}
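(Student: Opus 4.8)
The plan is to prove the contrapositive: assuming $\tle_\cR$ is not total, I will produce a subset that would have to be reachable in the star subset construction if the bound were met, yet which provably cannot be reached, forcing $\kappa(L^*) < 2^{n-1} + 2^{n-2}$. The only property of the triple system I will need is Condition 2 in its monotone form: since $\cD$ respects $\cR$, every transformation in $T_{\cD}$ is monotone with respect to $\tle_\cR$. I would set up exactly the star $\eps$-NFA and subset construction used in the proof of Theorem~\ref{thm:star}: adjoin a new initial-and-final state $0'$ mimicking the transitions of $0$, and add $\eps$-transitions from each final state to $0$. The key observation to record up front is that every reachable subset other than the initial $\{0'\}$ is contained in $Q_n$, and that the $\eps$-closure only ever adjoins the single state $0$, which is the $\tle_\cR$-maximum of $Q_n$.

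The heart of the argument is a single invariant: \emph{every reachable subset of $Q_n$ is a chain under $\tle_\cR$}, which I would prove by induction on the length of the word reaching it. The base case is the chain $\{0\}$. For the inductive step there are two facts. First, the monotone image of a chain is a chain: if $C$ is a chain and $t \in T_{\cD}$, then for $u = xt$ and $v = yt$ with $x,y \in C$ comparable (say $x \tle_\cR y$), monotonicity gives $u \tle_\cR v$, so any two elements of $Ct$ are comparable. Second, the $\eps$-closure adjoins only $0$, the global $\tle_\cR$-maximum, and adding the maximum to a chain keeps it a chain. Together these show that applying one letter and then closing under $\eps$ sends chains to chains, completing the induction.

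Now I would extract the contradiction. From the assumed incomparable pair $p, q$, I obtain an admissible non-chain subset: take $\{p,q\}$ if it is admissible (i.e. it meets the condition $P \cap F \neq \emptyset \implies 0 \in P$), and otherwise take $\{0,p,q\}$, which is automatically admissible because it contains $0$. In either case the subset contains the incomparable pair and so is not a chain, hence by the invariant it is unreachable. Finally I would invoke the counting behind the star upper bound: the only subsets that can appear are $\{0'\}$ together with the nonempty admissible subsets of $Q_n$, and the total number of these equals $2^{n-1} + 2^{n-2}$ precisely when there is a single final state distinct from $0$; attaining $\kappa(L^*) = 2^{n-1} + 2^{n-2}$ therefore forces every admissible subset to be reachable and pairwise distinguishable. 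An unreachable admissible subset contradicts this, so $\tle_\cR$ must be total. I expect the \textbf{main obstacle} to be the bookkeeping in this last step rather than the invariant itself: one must recall (or re-derive from the standard construction) that bound-attainment forces exactly one final state and full reachability of all admissible subsets, so that the loss of even one such subset drops the count strictly below the bound. The chain invariant, by contrast, is a two-line induction once monotonicity and the ``$\eps$-closure adds only the maximum'' observation are in place.
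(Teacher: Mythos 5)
Your proposal is correct and follows essentially the same route as the paper: both hinge on the invariant that every reachable subset in the star subset construction is a chain under $\tle_\cR$, proved from monotonicity of the transformations (Condition 2) together with the observation that the $\eps$-closure only ever adjoins the maximum element $0$. The only cosmetic difference is in extracting the contradiction — the paper applies the invariant to the (necessarily reachable) set $Q_n$ itself, whereas you exhibit a small unreachable admissible set such as $\{p,q\}$ or $\{0,p,q\}$ — but this does not change the substance of the argument.
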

This lemma does not imply that $\tle$ must be a total order on the states because it could have symmetric elements.
\begin{proof}
Perform the usual $\eps$-NFA construction for star on $\cD$, wherein a new initial and final state $0'$ is added to the DFA with the same transitions as $0$, and an $\eps$-transition is added from each $f \in F$ to $0$.
The only (potentially) reachable states in the subset construction of this $\eps$-NFA are
\[\text{$\{0'\}$, $\{S \subseteq Q_n \setminus F \mid S \not= \emptyset\}$, and $\{S \subseteq Q_n \mid 0 \in S, S \cap F \not= \emptyset\}$.}\]
In order for $L$ to meet the upper bound for star, $F$ must be a singleton $\{f\}$ and all of the above sets must be reachable.

To complete the proof, we show that for every reachable state $S \subseteq Q_n$, $\tle$ admits comparisons between every pair of states in $S$.
Since we know that every one of the above sets, including $Q_n$, is reachable, it will follow that all of the states are pairwise comparable with respect to $\tle$.

We proceed by induction on $|S|$. The statement is trivially true when $|S| = 1$. Assume now that $|S| \ge 2$.
Since the only way to reach larger sets in the subset construction is by using the $\eps$-transition from $f$ to $0$,
there must exist a set $S'$ of size $|S|-1$ and a transformations $t_1 \in T_{\cD}$ such that
$S'$ is reachable and $f \in (S't_1)$ so that $t_1$ maps $S'$ to $(S't_1) \cup \{0\}$ in the NFA.
Then, there must exist a transformation $t_2 \in T_{\cD}$ such that $(S't_1) \cup \{0\} \xrightarrow{t_2} S$.
As the elements of $S'$ are pairwise comparable (by inductive assumption) and $t_1$ is monotone, the elements of $(S't_1)$ must also be pairwise comparable.
The elements of $(S't_1) \cup \{0\}$ must also be pairwise comparable because $0$ is a maximum element in $\tle$.
Finally, since $t_2$ is monotone, we conclude that the elements of $S$ are pairwise comparable.
\qed
\end{proof}

For binary operations, two DFAs are considered at once. To avoid confusion between the two, we replace the state set of one of the DFAs with a ``primed'' version, in which each state $p$ becomes $p'$. The primed version of a DFA $\cD$ is denoted $\cD'$.

\begin{theorem}
\label{thm:product}
The dialect streams $(L_m(a, b, c , -, e, f) \mid m \ge 3)$ and $(L_n(e, f, -, -, a, b) \mid n \ge 3)$ of Definition~\ref{def:starwitness} meet the upper bound for product of proper suffix-convex languages.
Specifically, for $m, n \ge 3$, $\kappa(L_m(a, b, c , -, e, f)L_n(e, f, -, -, a, b)) = (m-1)2^n + 2^{n-1}$.
\end{theorem}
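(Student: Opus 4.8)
The plan is to combine the two dialects over the common alphabet $\Sigma=\{a,b,c,e,f\}$ and run the standard subset construction on the $\eps$-NFA for concatenation. First I would record that $L_n(e,f,-,-,a,b)$ is itself proper suffix-convex: its letters induce the monotone transformations increment (via $e$), decrement (via $f$), and identity (via $a,b$) with respect to the order $0\tg 1\tg\cdots\tg n-1$, so Theorem~\ref{thm:mono} applies, and properness follows exactly as in Proposition~\ref{prop:starproper}. The crucial feature of the two dialects is that their active letters are disjoint: on $\cD_m$ the letters $a,b,c$ act as in Definition~\ref{def:starwitness} while $e=f=\one$, whereas on $\cD_n'$ the letters $e,f$ act as increment and decrement while $a,b,c$ act as the identity. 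Hence $\{a,b,c\}$ move only the first component and $\{e,f\}$ move only the second. I would then form the product DFA whose states are pairs $(p,S)$ with $p\in Q_m$ and $S\subseteq Q_n$, where an $\eps$-transition injects state $0$ of $\cD_n'$ into $S$ exactly when the first component enters the unique final state $m-2$. The reachable states are precisely $\{(p,S)\mid p\ne m-2\}\cup\{(m-2,S)\mid 0\in S\}$, numbering $(m-1)2^n+2^{n-1}$; the whole task is to prove that each such pair is reachable and that any two are distinguishable.

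For reachability I would first pin the first component at $0$ and generate every $S\subseteq Q_n$. The word $a^{m-2}b^{m-2}$ returns the first component to $0$ while injecting $0$ into $S$ (the first component meets $m-2$ only once), and $e,f$ shift $S$ up and down without touching the first component; interleaving injections of $0$ with shifts reaches every subset of $Q_n$ from $(0,\emptyset)$, by an argument analogous to the subset-reachability argument of Theorem~\ref{thm:star} (note that at $p=0$ there is no constraint linking $0\in S$ to any other state). Having reached $(0,S)$ for arbitrary $S$, I would move the first component to any target $p$ using $\{a,b,c\}$, which fix $S$: for $p\le m-3$ apply $a^{p}$; for $p=m-1$ apply $a^{m-3}c$, reaching $m-1$ without passing through $m-2$; and for $p=m-2$, starting from an $S$ with $0\in S$, apply $a^{m-2}$, the terminal injection being harmless since $0\in S$ already. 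This realizes every reachable pair.

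Distinguishability splits into two cases. When $S_1\ne S_2$, pick $s$ in the symmetric difference; the word $e^{\,n-2-s}$ (or $f$ when $s=n-1$) sends $s$ to $n-2$ and sends every other state of $Q_n$ either strictly below $n-2$ or to the non-final state $n-1$, so exactly one of the two states becomes accepting. This word lies in $\{e,f\}^*$ and so fixes the first component; even if that component sits at $m-2$, the states injected during the word form a chain lying strictly below $n-2$, leaving the test intact. The genuinely hard case is $S_1=S_2=S$ with $p_1\ne p_2$: the difference lives only in the first component, yet the sole way the first component affects acceptance is by injecting $0$, and that injection always enters at state $0$ and then moves uniformly with the rest of $S$, so when $0\in S$ the injected thread is indistinguishable from the existing one. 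I expect this to be the main obstacle.

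I would resolve it by exploiting the \emph{asymmetry} of injection between the two first components. Applying $f^{n-1}$ then $e^{n-1}$ fixes the first component and drives $S$ to $\{n-1\}$ (or $\emptyset$), in either case removing $0$; moreover, if one of $p_1,p_2$ already equals $m-2$, then that component keeps injecting and fills its set to $Q_n$ while the other reaches $\{n-1\}$, which already splits the pair and reduces to the first case. Otherwise both components lie below $m-2$ or equal $m-1$, and I would route exactly one of them into $m-2$ while keeping the other off $m-2$ throughout, using the linear order: sending the numerically larger of $p_1,p_2$ up to $m-2$ by a power of $a$ leaves the smaller one strictly below, and if one component equals $m-1$ then $a$ fixes it while the other is driven up. This injects $0$ into exactly one set, producing a real difference at state $0$ against the background $\{n-1\}$, after which a final $e^{\,n-2}$ sends that $0$ to $n-2$ and fixes $n-1$, so precisely one state accepts. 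Checking the routing step in each configuration of $p_1,p_2$ relative to $m-2$ and $m-1$ is the delicate part, but the total order of $\tle_{\cR_n}$ makes it a short finite verification.
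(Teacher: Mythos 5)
Your proposal is correct and follows essentially the same route as the paper: the standard $\eps$-NFA/subset construction for the product, the same family of $(m-1)2^n+2^{n-1}$ reachable sets, and reachability/distinguishability arguments that exploit the disjoint action of $\{a,b,c\}$ and $\{e,f\}$ on the two components. Your word choices differ in detail (you build $S$ with the first component parked at $0$ and handle the equal-$S$ distinguishability case by a normalize-then-route argument, where the paper uses the single word $a^{m-2-p_1}e^n$), but these are cosmetic variations on the same proof.
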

\begin{proof}
Let $\cD'_m = \cD'_m(a, b, c, -, e, f)$ and $\cD_n = \cD_n(e, f, -, -, a, b)$ be DFAs as defined in Definition~\ref{def:starwitness}.
To clearly distinguish between the two DFAs, we take the state set of $\cD'_m$ to be $Q'_m = \{0', 1', \dots, (m-1)'\}$.
To construct an $\eps$-NFA for the product $L_m(a, b, c, -, e, f)L_n(e, f, -, -, a, b)$, connect $\cD'_m$ and $\cD_n$ with an $\eps$-transition from $(m-2)'$ to $0$.
Let $0'$ be the initial state and let $n-2$ be the only final state.
We perform the subset construction on this NFA and show that the $(m-1)2^n + 2^{n-1}$ states $\{\{p'\} \cup S \mid p' \not= (m-2)', S \subseteq Q_n\} \cup \{\{(m-2)', 0\} \cup S \mid S \subseteq Q_n\}$ are reachable and pairwise distinguishable.

\begin{figure}[h]
\unitlength 11pt
\begin{center}\begin{picture}(34,11)(-1,-1)
\gasset{Nh=2.4,Nw=3.5,Nmr=1.2,ELdist=0.3,loopdiam=1.2}
{\small
\node(0')(0,6){$0'$}\imark(0')
\node(1')(5,6){$1'$}
\node[Nframe=n](3dots')(10,6){$\cdots$}
\node(m-3')(15,6){$(m-3)'$}
\node(m-2')(20,6){$(m-2)'$}
\node(m-1')(25,6){$(m-1)'$}

\drawedge[curvedepth=1](0',1'){$a$}
\drawedge[curvedepth=1](1',3dots'){$a$}
\drawedge[curvedepth=1](3dots',m-3'){$a$}
\drawedge[curvedepth=1](m-3',m-2'){$a$}
\drawedge[curvedepth=1](m-2',m-1'){$a,c$}

\drawedge[ELside=r,curvedepth=1](1',0'){$b$}
\drawedge[ELside=r,curvedepth=1](3dots',1'){$b$}
\drawedge[ELside=r,curvedepth=1](m-3',3dots'){$b$}
\drawedge[ELside=r,curvedepth=1](m-2',m-3'){$b$}
\drawedge[ELside=r,curvedepth=1](m-1',m-2'){$b$}
\drawedge[curvedepth=3.5](m-3',m-1'){$c$}

\node(0)(12,0){$0$}
\node(1)(17,0){$1$}
\node[Nframe=n](3dots)(22,0){$\cdots$}
\node(n-2)(27,0){$n-2$}\rmark(n-2)
\node(n-1)(32,0){$n-1$}

\drawedge[curvedepth=1](0,1){$e$}
\drawedge[curvedepth=1](1,3dots){$e$}
\drawedge[curvedepth=1](3dots,n-2){$e$}
\drawedge[curvedepth=1](n-2,n-1){$e$}

\drawedge[ELside=r,curvedepth=1](1,0){$f$}
\drawedge[ELside=r,curvedepth=1](3dots,1){$f$}
\drawedge[ELside=r,curvedepth=1](n-2,3dots){$f$}
\drawedge[ELside=r,curvedepth=1](n-1,n-2){$f$}

\drawbpedge[ELside=r](m-2',-90,7,0,90,7){$\eps$}
}
\end{picture}\end{center}
\caption{The $\eps$-NFA recognizing $L_m(a,b,c,-,e,f)L_n(e, f, -, -, a, b)$; missing transitions are self-loops.}
\label{fig:productNFA}
\end{figure}
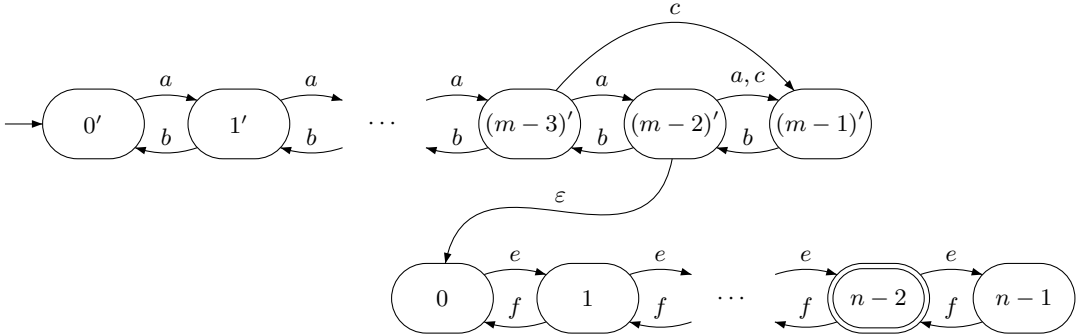

First, state $\{p'\}$ is reached by $a^p$ if $p < m-2$, and $\{(m-1)'\}$ is reached from $\{(m-3)'\}$ by $c$.
The state $\{(m-2)', 0\}$ is reached by $a^{m-2}$
For $k \ge 1$, state $\{(m-2)', 0, q_1, q_2, \dots, q_k\}$ with $0 < q_1 < q_2 < \cdots < q_k$ is reached from $\{(m-2)', 0, q_2 - q_1, q_3 - q_1, \dots, q_k - q_1\}$ by $ae^{q_1}b$; hence $\{(m-2)', 0\} \cup S$ is reachable for all $S \subseteq Q_n$.
If $p' \in \{0', \dots, (m-3)'\}$ and $S \subseteq Q_n$ then $\{p', 0\} \cup S$ is reached from $\{(m-2)', 0\} \cup S$ by $b^{m-2-p}$, and $\{(m-1)', 0\} \cup S$ is reached from $\{(m-2)', 0\} \cup S$ by $a$.
Finally, if $p' \not= (m-2)'$ and $0 < q_1 < q_2 < \cdots < q_k$ then $\{p', q_1, \dots, q_k\}$ is reached from $\{p', 0, q_2-q_1, \dots, q_k-q_1\}$ by $e^{q_1}$.
This proves that all the desired states are reachable.

Any two states $\{p'_1\} \cup S_1$ and $\{p'_2\} \cup S_2$ are distinguishable: If there is some $q \in S_1 \setminus S_2$ then they are distinguished by $e^{n-2-q}$ if $q \not= n-1$, or by $f$ if $q = n-1$.
Otherwise, assuming without loss of generality that $p'_1 < p'_2$, the two states are distinguished by $a^{m-2-p_1}e^n$. This proves that the product has complexity $(m-1)2^n + 2^{n-1}$.\qed
\end{proof}

\begin{theorem}
\label{thm:boolean}
The dialect streams $(L_m(a, b, -, -, e, f) \mid m \ge 3)$ and $(L_n(e, f, -, -, a, b) \mid n \ge 3)$ of Definition~\ref{def:starwitness} meet the upper bounds for boolean operations on proper suffix-convex languages.
Specifically, for $m, n \ge 3$ and $\circ \in \{\cup, \oplus, \setminus, \cap\}$, $\kappa(L_m(a, b, -, -, e, f) \circ L_n(e, f, -, -, a, b)) = mn$.
\end{theorem}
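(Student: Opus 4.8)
The plan is to build the standard direct-product DFA for the two dialects and show that all $mn$ of its states are reachable and pairwise distinguishable, which together with the upper bound $\kappa(L' \circ L) \le mn$ quoted at the start of Section~\ref{sec:star} forces equality. First I would record the transition structure of the two witnesses. In $L_m(a,b,-,-,e,f)$ the letters $a,b$ induce the increment/decrement transformations on the primed state set $Q'_m$ while $e,f$ act as the identity; in $L_n(e,f,-,-,a,b)$ the roles are reversed, so $e,f$ act on $Q_n$ while $a,b$ are the identity. Hence in the product machine on $Q'_m \times Q_n$ the pair $\{a,b\}$ moves only the first coordinate and $\{e,f\}$ moves only the second, and their actions commute. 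The product DFA has final-state set determined by $\circ$ together with $\{(m-2)'\}$ for the first machine and $\{n-2\}$ for the second.

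Reachability is immediate from this separation: starting at $(0',0)$, the word $e^q a^p$ reaches $(p',q)$ for every $0 \le p \le m-1$ and $0 \le q \le n-1$, since $a^p$ drives the first coordinate to $p'$ while fixing the second and $e^q$ drives the second to $q$ while fixing the first. This exhibits all $mn$ states.

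The crux is distinguishability, which I would handle uniformly for all four operations by exploiting the commuting, coordinate-separated actions. Since any applied word splits into an independent $\{a,b\}$-part acting on the first machine and an $\{e,f\}$-part acting on the second, I may choose these two parts independently. I would first record the two elementary control facts for each coordinate machine: within $\langle a,b\rangle$ any state can be sent to the final state $(m-2)'$ or to a non-final state, any two distinct states can be separated (one sent to $(m-2)'$ and the other not, by the same increment/decrement argument used in Theorem~\ref{thm:star}), and all states can be collapsed simultaneously to final (by $a^{m-1}b$) or to non-final (by $b^{m-1}$, which lands on $0' \ne (m-2)'$ since $m \ge 3$); the analogous facts hold for $\langle e,f\rangle$ with $n-2$, $e^{n-1}f$, and $f^{n-1}$. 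Now take two distinct product states. If their first coordinates differ, I collapse both second coordinates to a common finality value $y$ and simultaneously separate the first coordinates, producing product states of finality $\circ(1,y)$ and $\circ(0,y)$; a one-line check of the $2\times 2$ boolean table shows that for each of $\cup,\cap,\setminus,\oplus$ some choice of $y \in \{0,1\}$ makes these differ. The case where the second coordinates differ is symmetric: separate the second coordinates and collapse the first to a common value $x$, with finalities $\circ(x,1)$ and $\circ(x,0)$, and again each operation admits a suitable $x$.

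I expect the main obstacle to be bookkeeping rather than any deep difficulty: one must verify that the required separating and collapsing words genuinely live in the appropriate single-letter-pair subsemigroup (this is where the explicit words $a^{m-1}b$, $b^{m-1}$, $e^{n-1}f$, $f^{n-1}$ and the separability of each coordinate machine are used), and then confirm the finite boolean table for all four operations in both cases. Combining reachability and distinguishability of all $mn$ states with the upper bound yields $\kappa(L_m(a,b,-,-,e,f) \circ L_n(e,f,-,-,a,b)) = mn$; properness and suffix-convexity of the dialects follow exactly as in Proposition~\ref{prop:starproper} and Theorem~\ref{thm:mono}, since the inducing transformations remain monotone with respect to the order of Definition~\ref{def:starsystem}.
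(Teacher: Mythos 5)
Your proposal is correct and follows essentially the same route as the paper: the direct-product DFA on $Q'_m \times Q_n$, reachability of all $mn$ states via words of the form $a^pe^q$ (using that $\{a,b\}$ and $\{e,f\}$ act on separate coordinates), and distinguishability by separating one coordinate while collapsing the other. The only difference is presentational --- you systematize the four operations through a single $2\times 2$ boolean-table check with a chosen collapse value, whereas the paper writes out an explicit distinguishing word for each operation; the underlying words are the same.
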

\begin{proof}
The common construction for boolean operations is the direct product of the DFAs.
For each operation $\circ \in \{\cup, \oplus, \setminus, \cap\}$, define a DFA $\cD^{\circ}$ with alphabet $\{a, b, e, f\}$, states $Q'_m \times Q_n$, initial state $(0', 0)$, and transitions given by $(p', q) \xrightarrow{w} (p'w, qw)$ for all $w \in \{a, b, e, f\}^*$.
The final states of $\cD^{\circ}$ are equal to $F^\circ = (\{(m-2)'\} \times Q_n) \circ (Q'_m \times \{n-2\})$.

\begin{figure}[h]
\unitlength 8pt
\begin{center}\begin{picture}(42,22)(-1,-1)
\gasset{Nh=2.4,Nw=7.5,Nmr=1.2,ELdist=0.3,loopdiam=1.2}
{\small
\node(0'0)(0,20){$0', 0$}\imark(0'0)
\node(1'0)(10,20){$1', 0$}
\node[Nframe=n](3'0)(20,20){$\cdots$}
\node(m-2'0)(30,20){$(m-2)', 0$}\rmark(m-2'0)
\node(m-1'0)(40,20){$(m-1)', 0$}

\node(0'1)(0,15){$0', 1$}
\node(1'1)(10,15){$1', 1$}
\node[Nframe=n](3'1)(20,15){$\cdots$}
\node(m-2'1)(30,15){$(m-2)', 1$}\rmark(m-2'1)
\node(m-1'1)(40,15){$(m-1)', 1$}

\node[Nframe=n](0'3)(0,10){$\vdots$}
\node[Nframe=n](1'3)(10,10){$\vdots$}
\node[Nframe=n](3'3)(20,10){$\ddots$}
\node[Nframe=n](m-2'3)(30,10){$\vdots$}
\node[Nframe=n](m-1'3)(40,10){$\vdots$}

\node(0'n-2)(0,5){$0', n-2$}\rmark(0'n-2)
\node(1'n-2)(10,5){$1', n-2$}\rmark(1'n-2)
\node[Nframe=n](3'n-2)(20,5){$\cdots$}
}
{\scriptsize
\node(m-2'n-2)(30,5){$(m-2)', n-2$}
\node(m-1'n-2)(40,5){$(m-1)', n-2$}\rmark(m-1'n-2)
}
{\small
\node(0'n-1)(0,0){$0', n-1$}
\node(1'n-1)(10,0){$1', n-1$}
\node[Nframe=n](3'n-1)(20,0){$\cdots$}
}
{\scriptsize
\node(m-2'n-1)(30,0){$(m-2)', n-1$}\rmark(m-2'n-1)
\node(m-1'n-1)(40,0){$(m-1)', n-1$}
}

\drawedge[curvedepth=1](0'0,1'0){$a$}
\drawedge[curvedepth=1](1'0,0'0){$b$}
\drawedge[curvedepth=1](1'0,3'0){$a$}
\drawedge[curvedepth=1](3'0,1'0){$b$}
\drawedge[curvedepth=1](3'0,m-2'0){$a$}
\drawedge[curvedepth=1](m-2'0,3'0){$b$}
\drawedge[curvedepth=1](m-2'0,m-1'0){$a$}
\drawedge[curvedepth=1](m-1'0,m-2'0){$b$}

\drawedge[curvedepth=1](0'1,1'1){$a$}
\drawedge[curvedepth=1](1'1,0'1){$b$}
\drawedge[curvedepth=1](1'1,3'1){$a$}
\drawedge[curvedepth=1](3'1,1'1){$b$}
\drawedge[curvedepth=1](3'1,m-2'1){$a$}
\drawedge[curvedepth=1](m-2'1,3'1){$b$}
\drawedge[curvedepth=1](m-2'1,m-1'1){$a$}
\drawedge[curvedepth=1](m-1'1,m-2'1){$b$}

\drawedge[curvedepth=1](0'n-2,1'n-2){$a$}
\drawedge[curvedepth=1](1'n-2,0'n-2){$b$}
\drawedge[curvedepth=1](1'n-2,3'n-2){$a$}
\drawedge[curvedepth=1](3'n-2,1'n-2){$b$}
\drawedge[curvedepth=1](3'n-2,m-2'n-2){$a$}
\drawedge[curvedepth=1](m-2'n-2,3'n-2){$b$}
\drawedge[curvedepth=1](m-2'n-2,m-1'n-2){$a$}
\drawedge[curvedepth=1](m-1'n-2,m-2'n-2){$b$}

\drawedge[curvedepth=1](0'n-1,1'n-1){$a$}
\drawedge[curvedepth=1](1'n-1,0'n-1){$b$}
\drawedge[curvedepth=1](1'n-1,3'n-1){$a$}
\drawedge[curvedepth=1](3'n-1,1'n-1){$b$}
\drawedge[curvedepth=1](3'n-1,m-2'n-1){$a$}
\drawedge[curvedepth=1](m-2'n-1,3'n-1){$b$}
\drawedge[curvedepth=1](m-2'n-1,m-1'n-1){$a$}
\drawedge[curvedepth=1](m-1'n-1,m-2'n-1){$b$}

\drawedge[curvedepth=1](0'0,0'1){$e$}
\drawedge[curvedepth=1](0'1,0'0){$f$}
\drawedge[curvedepth=1](0'1,0'3){$e$}
\drawedge[curvedepth=1](0'3,0'1){$f$}
\drawedge[curvedepth=1](0'3,0'n-2){$e$}
\drawedge[curvedepth=1](0'n-2,0'3){$f$}
\drawedge[curvedepth=1](0'n-2,0'n-1){$e$}
\drawedge[curvedepth=1](0'n-1,0'n-2){$f$}

\drawedge[curvedepth=1](1'0,1'1){$e$}
\drawedge[curvedepth=1](1'1,1'0){$f$}
\drawedge[curvedepth=1](1'1,1'3){$e$}
\drawedge[curvedepth=1](1'3,1'1){$f$}
\drawedge[curvedepth=1](1'3,1'n-2){$e$}
\drawedge[curvedepth=1](1'n-2,1'3){$f$}
\drawedge[curvedepth=1](1'n-2,1'n-1){$e$}
\drawedge[curvedepth=1](1'n-1,1'n-2){$f$}

\drawedge[curvedepth=1](m-2'0,m-2'1){$e$}
\drawedge[curvedepth=1](m-2'1,m-2'0){$f$}
\drawedge[curvedepth=1](m-2'1,m-2'3){$e$}
\drawedge[curvedepth=1](m-2'3,m-2'1){$f$}
\drawedge[curvedepth=1](m-2'3,m-2'n-2){$e$}
\drawedge[curvedepth=1](m-2'n-2,m-2'3){$f$}
\drawedge[curvedepth=1](m-2'n-2,m-2'n-1){$e$}
\drawedge[curvedepth=1](m-2'n-1,m-2'n-2){$f$}

\drawedge[curvedepth=1](m-1'0,m-1'1){$e$}
\drawedge[curvedepth=1](m-1'1,m-1'0){$f$}
\drawedge[curvedepth=1](m-1'1,m-1'3){$e$}
\drawedge[curvedepth=1](m-1'3,m-1'1){$f$}
\drawedge[curvedepth=1](m-1'3,m-1'n-2){$e$}
\drawedge[curvedepth=1](m-1'n-2,m-1'3){$f$}
\drawedge[curvedepth=1](m-1'n-2,m-1'n-1){$e$}
\drawedge[curvedepth=1](m-1'n-1,m-1'n-2){$f$}
\end{picture}\end{center}
\caption{The DFA $\cD^{\oplus}$ recognizing $L_m(a,b,-,-,e,f) \oplus L_n(e, f, -, -, a, b)$; missing transitions are self-loops. DFAs $\cD^\cup$, $\cD^\setminus$, and $\cD^\cap$ differ only in the final states.}
\label{fig:booleanNFA}
\end{figure}
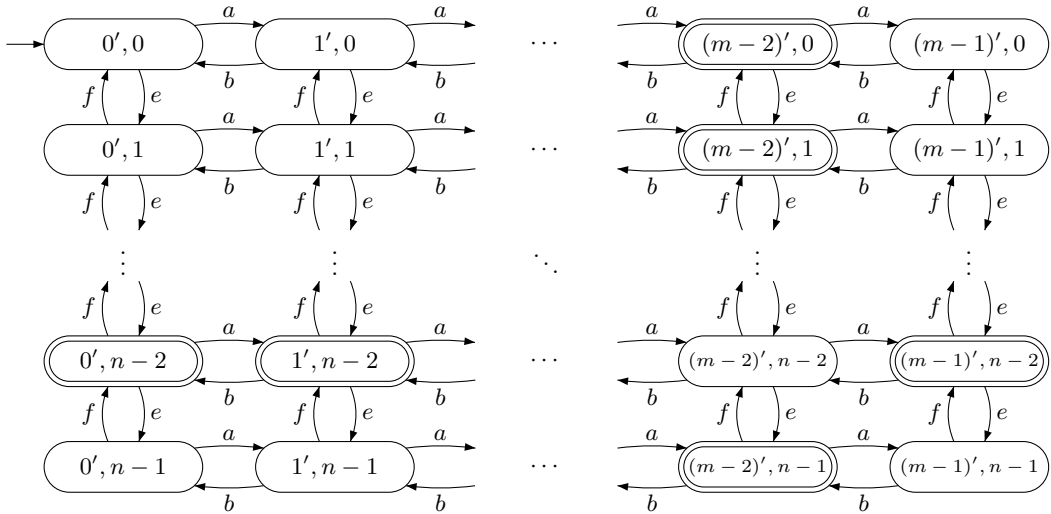

Reachability is the same for all operations: State $(p', q)$ is reached in $\cD^\circ$ by $a^pe^q$. We must show that states are pairwise distinguishable for each operation.

\noin\textbf{Union:} A state $(p', q)$ is final in $\cD^{\cup}$ if $p' = (m-2)'$ or $q = n-2$. State $(p'_1, q_1)$ is distinguishable from $(p'_2, q_2)$  in $\cD^\cup$ if $p_1 < p_2$ since $(p'_1, q_1) a^{m-2-p_1}e^{n-1} =  ((m-2)', n-1) \in F$ and $(p'_2, q_2) a^{m-2-p_1}e^{n-1} = ((m-1)', n-1) \not\in F$.
Similarly, if $q_1 < q_2$, then they are distinguished by $a^{m-1}e^{n-2-q_1}$ since this maps $(p'_1, q_1)$ to $((m-1)', n-2)$ and $(p'_2, q_2)$ to $((m-1)', n-1)$.

\noin\textbf{Symmetric Difference:} The final states in $\cD^{\oplus}$ are the same as in $\cD^{\cup}$ except for $((m-2)', n-2)$. The argument for union applies here as well.

\noin\textbf{Difference:} The final states in $\cD^{\setminus}$ are $((m-2)', q)$ for all $q \not= n-2$. If $p_1 < p_2$, $(p'_1, q_1)$ is distinguishable from $(p'_2, q_2)$  in $\cD^{\setminus}$ as in union.
If $q_1 < q_2$, then they are distinguished by $a^{m-1}be^{n-2-q_1}$ since this word maps $(p'_1, q_1)$ to $((m-2)', n-2) \not\in F$ and $(p'_2, q_2)$ to $((m-2)', n-1) \in F$.

\noin\textbf{Intersection:} The only final state in $\cD^{\cap}$ is $((m-2)', n-2)$. State $(p'_1, q_1)$ is distinguished from $(p'_2, q_2)$  in $\cD^\cap$ by $a^{m-2-p_1}e^{n-1}f$ if $p_1 < p_2$ or by $a^{m-1}be^{n-2-q_1}$ if $q_1 < q_2$.\qed
\end{proof}


\section{Reversal}
\label{sec:reversal}
We first prove an upper bound for the complexity of reversal in suffix-convex languages (not necessarily proper), and then give a proper suffix-convex witness stream that meets the bound for $n \ge 3$.

\begin{theorem}
\label{thm:reversalbd}
If $L$ is a suffix-convex language with $\kappa(L) = n$, then $\kappa(L^R) \le 2^n - 2^{n-3}$.
\end{theorem}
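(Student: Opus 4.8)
The plan is to analyze $L^R$ through the reverse subset construction and to exploit the canonical triple system of Proposition~\ref{prop:scimprespect}. Fix a minimal DFA $\cD = (Q_n, \Sigma, \delta, 0, F)$ for $L$ and let $\cR = \{(p,q,r) \mid K_p \cap K_q \subseteq K_r\}$ be the canonical relation it respects. Building the NFA for $L^R$ (reverse all transitions, swap the roles of initial and final states) and determinizing it, the reachable states are exactly the sets $P_w = \{s \in Q_n \mid w \in K_s\}$ for $w \in \Sigma^*$, and $\kappa(L^R)$ is at most the number of distinct such $P_w$. The crucial observation is that every $P_w$ is \emph{closed} under $\cR$: if $p, q \in P_w$ and $(p,q,r) \in \cR$, then $w \in K_p \cap K_q \subseteq K_r$, so $r \in P_w$. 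Hence it suffices to bound the number of $\cR$-closed subsets of $Q_n$.

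The key reduction is that a single \emph{distinct} triple suffices. If there exist three pairwise distinct states $p, q, r$ with $(p,q,r) \in \cR$, then no $\cR$-closed set can contain both $p$ and $q$ while omitting $r$; there are exactly $2^{n-3}$ subsets $S$ of $Q_n$ with $p, q \in S$ and $r \notin S$, so at least $2^{n-3}$ subsets fail to be $\cR$-closed, giving $\kappa(L^R) \le 2^n - 2^{n-3}$ immediately. The whole proof therefore reduces to exhibiting one distinct triple in $\cR$, or, when none exists, bounding the reachable sets by other means.

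To produce such a triple I would first extract structure from suffix-convexity. Taking $w = \eps$ in the reformulation $K_{0w} \cap K_{0uvw} \subseteq K_{0vw}$ yields $(0, sv, 0v) \in \cR$ for every state $s$ and word $v$; equivalently $sv \tle 0v$, so $0v$ dominates all of $\{sv \mid s \in Q_n\}$ in the preorder $\tle$. I then split on the shape of $\tle$ (recall $0$ is its maximum). If some two distinct states $p, q \ne 0$ are $\tle$-comparable, say $p \tle q$, then $(0, p, q)$ is a distinct triple. If instead some $p \ne 0$ is symmetric to $0$ (so $L = K_0 \subseteq K_p$), then $(0, r, p)$ is a distinct triple for any third state $r$. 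An empty quotient, should one exist, falls into the first case since it lies $\tle$-below every other state. The only remaining possibility is that $Q_n \setminus \{0\}$ is a $\tle$-antichain and $0$ is a strict maximum.

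This residual antichain case is the main obstacle, and it is handled not by finding a triple but by showing directly that few subsets are reachable. When $0$ is a strict maximum with the rest an antichain, the relation $sv \tle 0v$ forces $sv = 0v$ whenever $0v \ne 0$, because nothing other than $0v$ itself lies $\tle$-below a non-maximal element; thus every word $v$ with $0v \ne 0$ induces a \emph{constant} transformation, and every other word fixes $0$. Consequently each $P_w$ is the preimage of $F$ under either a constant map (giving $\emptyset$ or $Q_n$) or a $0$-fixing map (which is forced to agree with $F$ on membership of $0$), so there are at most $2^{n-1} + 2$ reachable sets. Since $2^{n-1} + 2 \le 2^n - 2^{n-3}$ for $n \ge 3$, the bound holds here as well. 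The delicate points to verify are the exhaustiveness of the case split on $\tle$ and the claim that every ``moving'' word must be constant in the antichain case; the remaining steps are routine counting.
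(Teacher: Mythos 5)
Your proposal is correct and takes essentially the same route as the paper's proof: both reduce $\kappa(L^R)$ to counting the subsets $S\subseteq Q_n$ realized as $\{s \mid w\in K_s\}$ (equivalently, the nonempty atoms), both observe that a single triple $(p,q,r)\in\cR$ with $r\notin\{p,q\}$ rules out at least $2^{n-3}$ such subsets, and both handle the residual case by noting that monotonicity with respect to the trivial preorder forces every transformation that moves $0$ to be constant. The only loose end is $n\le 2$: there are then no three pairwise distinct states to form your triple, and your count $2^{n-1}+2$ in the antichain case exceeds $2^n-2^{n-3}$, whereas the paper closes this range by extracting from the constant-map argument a containment $K_0\subseteq K_p$ or $K_p\subseteq K_0$, which already caps the count at $\tfrac{3}{4}2^n$.
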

\begin{proof}
The case $n=1$ is trivial; assume $n \ge 2$.
We use the fact that the complexity of $L^R$ is equal to the number of atoms of $L$~\cite{BrTa14}.
We therefore wish to show that $L$ has at most $2^n - 2^{n-3} = \frac{7}{8}2^n$ atoms.

Let $\cD = (Q_n, \Sigma, \delta, 0, F)$ be a minimal DFA for $L$.
By Proposition~\ref{prop:scimprespect}, $\cD$ respects the triple system
\[\cR = \{(p, q, r) \mid K_p \cap K_q \subseteq K_r\}.\]
Recall that an atom is an intersection $A_S = \bigcap_{i \in S} K_i \cap \bigcap_{i \not\in S} \overline{K_i}$ for some $S \subseteq Q_n$, where $K_q$ denotes the language of state $q$ in $\cD$.
The number of atoms is the number of distinct, non-empty sets $A_S$ as $S$ ranges over the subsets of $Q_n$.

First, consider the case where there is some triple $(p, q, r) \in \cR$ with $r \not\in\{p, q\}$.
Then $K_p \cap K_q \subseteq K_r$, and hence $A_S = \emptyset$ whenever $p, q \in S$ and $r \not\in S$.
There are at least $\frac{1}{8}2^n$ sets $S \subseteq Q_n$ with this property.
Thus $L$ has at most $\frac{7}{8}2^n$ atoms, as required.

Otherwise, $\cR$ is exactly $\{(p, q, p) \mid p, q \in Q_n\} \cup \{(p, q, q) \mid p, q \in Q_n\}$; that is, it has no extraneous triples beyond those required by (A) and (B).
Here $\tle$ is the partial order defined by $p \tle q$ if and only if $q = 0$.
By Condition 2, every transformation $t \in T_{\cD}$ is monotone with respect to this order.
Thus, if $0t \not= 0$, then $Q_nt = 0t$.
If $0 \not\in F$ this implies that $K_0 \subseteq K_p$ for all $p \in Q_n$, since $0t \in F \implies Q_nt = 0t \in F$.
If $0 \in F$, we discover that $K_p \subseteq K_0$ for all $p \in Q_n$, since $0t \not\in F \implies Q_nt = 0t \not\in F$.
Hence, there must be some containment between quotients of $L$, say $K_p \subseteq K_q$ for $p \not= q$.
Then $A_S = \emptyset$ whenever $p \in S$ and $q \not\in S$. Thus, $L$ has at most $\frac{3}{4}2^n$ atoms. \qed
\end{proof}

The proof of Theorem~\ref{thm:reversalbd} actually tells us a great deal about what a complex witness must look like.
Extending the proof slightly, we obtain an important corollary.
\begin{corollary}
\label{cor:reversalbd}
Suppose $L$ is a suffix-convex language with $\kappa(L) = n \ge 3$ and $\kappa(L^R) = 2^n - 2^{n-3}$.
Let $\cD = (Q_n, \Sigma, \delta, 0, F)$ be a minimal DFA for $L$ and assume that $\cD$ respects a triple system $\cR$.
Then there are exactly two non-zero states $p$ and $q$ such that $p \tle_\cR q$.
Furthermore, $\tle_\cR$ is a partial order, i.e. no states are symmetric with respect to $\tle_\cR$.
\end{corollary}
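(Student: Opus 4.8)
The plan is to sharpen the counting argument behind Theorem~\ref{thm:reversalbd}. Recall that $\kappa(L^R)$ equals the number of non-empty atoms $A_S$, so the hypothesis $\kappa(L^R)=2^n-2^{n-3}$ means exactly $2^{n-3}$ of the $2^n$ atoms are empty, and the emptiness mechanism is that any triple $(p,q,r)\in\cR$ forces $K_p\cap K_q\subseteq K_r$, hence $A_S=\emptyset$ for every $S$ with $p,q\in S$ and $r\notin S$. First I would pin down the triples with $r\notin\{p,q\}$, the ``case~1'' triples of the theorem. A triple with three distinct states empties exactly the $2^{n-3}$ sets of its forbidden pattern, and distinct such triples have distinct patterns, so two of them would empty strictly more than $2^{n-3}$ atoms; a degenerate triple with $p=q$ empties $2^{n-2}$ atoms at once. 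Since exactly $2^{n-3}$ atoms vanish, $\cR$ must contain exactly one case~1 triple $(p_0,q_0,r_0)$, its three states are distinct, and there is no degenerate triple at all. (At least one case~1 triple exists because, as in the theorem, the ``case~2'' alternative yields at most $\tfrac{3}{4}2^n<2^n-2^{n-3}$ atoms.) In particular $(0,0,d)\notin\cR$ for $d\neq0$, so $0$ is already a strict maximum of $\tle_\cR$.

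The crux is to show $0\in\{p_0,q_0\}$, which is what links the unique triple to $\tle_\cR$. For this I would establish an auxiliary bound: if $\tle_\cR$ were the ``antichain below $0$'' order, in which $0$ is the strict maximum and no two distinct non-zero states are comparable, then $L$ would have at most $2^{n-1}+1$ atoms. The argument is that Condition~2 makes every $t\in T_\cD$ monotone for $\tle_\cR$, and monotonicity for this particular order forces each $t$ either to fix $0$ or to be a constant map. Identifying the non-empty atoms with the realized signatures $\{t^{-1}(F)\mid t\in T_\cD\cup\{\one\}\}$, the maps fixing $0$ (together with $\one$) contribute only subsets that uniformly contain or uniformly avoid $0$ according to whether $0\in F$, while the constant maps contribute only $Q_n$ and $\emptyset$; this caps the count at $2^{n-1}+1$. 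Since $2^{n-1}+1<2^n-2^{n-3}$ for $n\ge3$, the order $\tle_\cR$ cannot be the antichain below $0$.

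Combining the two steps finishes the argument. As $0$ is a strict maximum, the only way for $\tle_\cR$ to differ from the antichain below $0$ is to possess a comparable pair of distinct non-zero states $c,d$. By definition $c\tle_\cR d$ is the triple $(0,c,d)\in\cR$, which has $d\notin\{0,c\}$ and is therefore a case~1 triple; hence it must be \emph{the} unique triple, forcing $0\in\{p_0,q_0\}$. Writing the triple as $(0,a,b)$ up to property~(B), we conclude that $a\tle_\cR b$ is the one and only comparability between non-zero states, giving exactly the two non-zero states required. No symmetric pair can occur: if $c\sim d$ for distinct non-zero $c,d$ then $(0,c,d)$ and $(0,d,c)$ would be two distinct case~1 triples, and $0\sim d$ for $d\neq0$ was already ruled out since it needs the degenerate triple $(0,0,d)$. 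Thus $\tle_\cR$ is a partial order.

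The step I expect to be the main obstacle is the auxiliary bound of the second paragraph -- in particular, extracting the dichotomy ``$t$ fixes $0$ or $t$ is constant'' from monotonicity for the antichain-below-$0$ order, and then counting the realizable signatures correctly in the two cases $0\in F$ and $0\notin F$. This is the genuinely new ingredient beyond the proof of Theorem~\ref{thm:reversalbd}; everything else is bookkeeping with the forbidden-pattern count and property~(B).
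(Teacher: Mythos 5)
Your proof is correct, and it rests on the same two engines as the paper's: the forbidden-pattern count (each triple $(p,q,r)\in\cR$ with $K_p\cap K_q\subseteq K_r$ kills every $A_S$ with $p,q\in S$, $r\notin S$) and the monotonicity dichotomy for the antichain-below-$0$ order (every $t\in T_\cD$ fixes $0$ or is constant). The difference is in the decomposition. The paper argues pairwise about $\tle_\cR$: first it rules out symmetric pairs, then zero comparable non-zero pairs, then two or more comparable pairs, and the last step requires a small case analysis of how the patterns $\{0,p_1\}\to q_1$ and $\{0,p_2\}\to q_2$ can overlap (ending in the bound $\tfrac{13}{16}2^n$). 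You instead classify all triples of $\cR$ globally up front: since exactly $2^{n-3}$ atoms are empty and any two distinct forbidden patterns already kill strictly more than $2^{n-3}$ sets, $\cR$ carries exactly one pattern on three distinct states and no triple of the form $(p,p,q)$. Everything else (the strict maximality of $0$, the uniqueness of the comparable pair, the absence of symmetric pairs) then falls out of that single uniqueness statement, which neatly sidesteps the paper's overlap analysis. Your antichain bound of $2^{n-1}+1$ via the signature sets $t^{-1}(F)$ is also a slightly different (and sharper) route than the paper's, which derives a quotient containment $K_p\subseteq K_q$ and concludes $\tfrac34 2^n$; both rest on the same ``fix $0$ or constant'' observation, so this is a cosmetic rather than substantive divergence. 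One small repair: by property (B), $(p_0,q_0,r_0)\in\cR$ forces $(q_0,p_0,r_0)\in\cR$ as well, so ``exactly one case~1 triple'' should read ``exactly one forbidden pattern, i.e.\ exactly one triple up to the swap of its first two coordinates''; your phrase ``distinct such triples have distinct patterns'' is literally false for this reason, though the claim you actually use --- distinct \emph{patterns} jointly empty more than $2^{n-3}$ atoms --- is correct.
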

\begin{proof}
We prove that $\tle_\cR$ is a partial order first:
To a contradiction, suppose there are distinct states $p$ and $q$ such that $p \sim q$.
Then $K_0 \cap K_p \subseteq K_q$ and $K_0 \cap K_q \subseteq K_p$.
If $p = 0$, then this implies that $K_0 \subseteq K_q$ and hence $L$ has at most $\frac{3}{4}2^n$ atoms.
We have a similar contradiction if $q = 0$.
Finally, if $p$ and $q$ are both non-zero, then $A_S = \emptyset$ whenever $0, p \in S$ and $q \not\in S$ or $0, q \in S$ and $p \in S$; again, we conclude that $L$ can have no more than $\frac{3}{4}2^n$ atoms. This contradicts the assumption $\kappa(L^R) = \frac{7}{8}2^n$.

Now consider the first claim. Suppose there are no non-zero states $p$ and $q$ with $p \tle q$
Then $\tle$ is the partial order defined by $p \tle q$ if and only if $q = 0$, exactly as in the last case of Theorem~\ref{thm:reversalbd}.
The same argument shows that $L$ has at most $\frac{3}{4}2^n$ atoms, a contradiction.

Second, suppose there are at least two pairs of distinct non-zero states, say $(p_1, q_1)$ and $(p_2, q_2)$ such that $p_1 \tle q_1$ and $p_2 \tle q_2$.
We have
\begin{equation}
0, p_1 \in S \text{ and } q_1 \not\in S \implies A_S = \emptyset.
\end{equation}
This reduces the number of atoms of $L$ to at most $\frac{7}{8}2^n$.
Similarly,
\begin{equation}
0, p_2 \in S \text{ and } q_2 \not\in S \implies A_S = \emptyset.
\end{equation}
If $p_1$, $q_1$, $p_2$, and $q_2$ are all distinct, (1) and (2) reduce the number of atoms to at most $\frac{3}{4}2^n$.
However, even if there is some overlap between $(p_1, q_1)$ and $(p_2, q_2)$ (e.g. $p_1 = p_2$ or $q_1 = p_2$),
we can still deduce that the number of atoms is less than $\frac{7}{8}2^n-1$, yielding a contradiction. A more thorough case analysis shows that that number of atoms is at most $\frac{13}{16}2^n$. \qed
\end{proof}

Simply put, this corollary says that the triple system of a witness for reversal must look something like Figure~\ref{fig:revsystem}.
We use this triple system to create our witness for reversal.

\begin{definition}
\label{def:revsystem}
For $n \ge  3$, define $\cS_n = (Q_n, 0, \{1\}, \cR_n)$ where
\[\cR_n = \{(p, q, p) \mid p, q \in Q_n\} \cup \{(p, q, q) \mid p, q \in Q_n\} \cup \{(0,2, 1), (2, 0, 1)\}.\]
\end{definition}

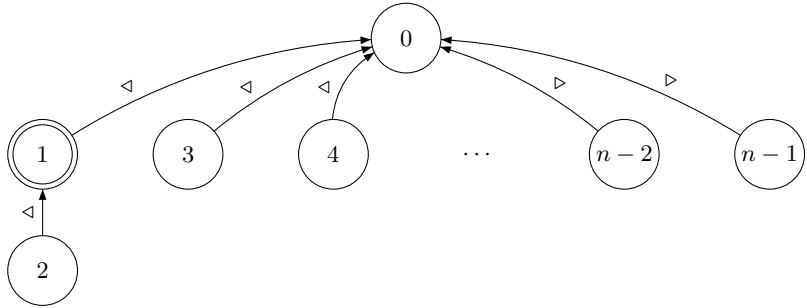
\begin{figure}
\unitlength 11pt
\begin{center}\begin{picture}(25,8)(0,0)
\gasset{Nh=2.4,Nw=2.4,Nmr=1.2,ELdist=0.3,loopdiam=1.2}
\node(0)(12.5,8){$0$}
\node(1)(0,4){$1$}\rmark(1)
\node(2)(0,0){$2$}
\node(3)(5,4){$3$}
\node(4)(10,4){$4$}
\node[Nframe=n](5)(15,4){$\cdots$}
\node(6)(20,4){$n-2$}
\node(7)(25,4){$n-1$}

\drawedge(2,1){$\tl$}
\drawedge[curvedepth=1,ELpos=27](1,0){$\tl$}
\drawedge[curvedepth=0.6,ELpos=34](3,0){$\tl$}
\drawedge[curvedepth=1,ELpos=40](4,0){$\tl$}
\drawedge[curvedepth=-0.6,ELpos=37,ELside=r](6,0){$\tg$}
\drawedge[curvedepth=-1,ELpos=31,ELside=r](7,0){$\tg$}

\end{picture}\end{center}
\caption{The order relation $\tle_{\cR_n}$ of Definition~\ref{def:revsystem} used in the complex witness stream for reversal.}
\label{fig:revsystem}
\end{figure}

Notice that $\tle_{\cR_n}$ is a partial order and $\cR_n$ is the relation defined in Theorem~\ref{thm:mono}.
Hence there exists a DFA respecting $\cS_n$ whose transition semigroup contains every transformation of $Q_n$ that is monotone with respect to $\tle_{\cR_n}$.
However, the DFA defined in that theorem has an enormous alphabet, with one letter for each monotone function.
Instead, we define a trimmed-down version with a smaller transition semigroup to be our witness for reversal.

\begin{definition}
\label{def:revwitness}
For $n \ge 3$, let $L_n(\Sigma)$ be the language recognized by the DFA $\cD_n = (Q_n, \Sigma, \delta_n, 0, \{1\})$, where $\Sigma = \{a, b, c, d, e, f, g, h\}$ and $\delta_n$ is given by the transformations \begin{align*}
a &\colon (3, 4, \dots, n-1) & e &\colon (1 \to 2)\\
b &\colon (3 \to 1) & f &\colon (2 \to 1)\\
c &\colon (3 \to 2) & g &\colon (Q_n \to 3)\\
d &\colon (1 \to 0) & h &\colon (Q_n\setminus\{0\} \to 2)(0 \to 1)
\end{align*}
\end{definition}

\begin{figure}
\unitlength 11pt
\begin{center}\begin{picture}(27,6.5)(-6,1.5)
\gasset{Nh=2.4,Nw=2.4,Nmr=1.2,ELdist=0.3,loopdiam=1.2}
\node(0)(-5,4){$0$}\imark(0)
\node(1)(0,6){$1$}\rmark(1)
\node(2)(0,2){$2$}
\node(3)(5,4){$3$}
\node(4)(10,4){$4$}
\node[Nframe=n](5)(15,4){$\cdots$}
\node(n-1)(20,4){$n-1$}

\drawedge(3,4){$a$}
\drawedge(4,5){$a$}
\drawedge(5,6){$a$}
\drawedge[curvedepth=2](n-1,3){$a$}

\drawedge[curvedepth=1](1,2){$e$}
\drawedge[curvedepth=1](2,1){$f$}

\drawedge[curvedepth=-1,ELside=r](3,1){$b$}
\drawedge[curvedepth=1](3,2){$c$}

\drawedge[curvedepth=-1,ELside=r](1,0){$d$}
\end{picture}\end{center}
\caption{DFA $\cD_n$ of Definition~\ref{def:starwitness}. Missing transitions are self-loops. Letters $g$ and $h$ not shown.}
\label{fig:revwitness}
\end{figure}
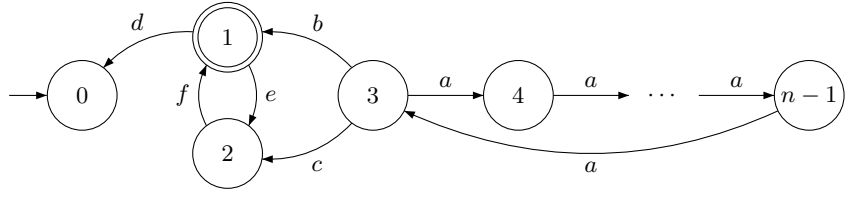

\begin{proposition}
\label{prop:revproper}
For $n \ge 3$, $L_n(\Sigma)$ of Definition~\ref{def:revwitness} is proper suffix-convex and $\kappa(L_n) = n$.
\end{proposition}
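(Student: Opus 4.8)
The plan is to verify four things about $L_n$ of Definition~\ref{def:revwitness}: that it has complexity exactly $n$, that it is suffix-convex, and that it is neither a left ideal, nor suffix-closed, nor suffix-free. The complexity claim $\kappa(L_n)=n$ should follow by exhibiting that all $n$ states are reachable and pairwise distinguishable. For reachability I would use $g\colon(Q_n\to 3)$ together with $a\colon(3,4,\dots,n-1)$ to reach every state in $\{3,\dots,n-1\}$ from $0$, and $b,c,d$ to reach states $1$, $2$, and $0$; in fact $h$ sends $0$ to $1$, and $d$ sends $1$ to $0$, so all of $\{0,1,2,3,\dots,n-1\}$ are reachable. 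For distinguishability, since $1$ is the only final state, I would show that for any two distinct states there is a word carrying exactly one of them to $1$; the letters $a,b,c,d,e,f$ permit moving around the cycle $\{3,\dots,n-1\}$ and shuttling between $0,1,2$, so routine case-checking separates each pair.

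Suffix-convexity is the cheap step here: by Theorem~\ref{thm:mono} (with ``$\preceq$'' the partial order $\tle_{\cR_n}$ of Definition~\ref{def:revsystem}), any DFA whose every letter induces a transformation monotone with respect to $\tle_{\cR_n}$ recognizes a suffix-convex language. So it suffices to check that each of the eight transformations $a,b,c,d,e,f,g,h$ is monotone for the order pictured in Figure~\ref{fig:revsystem}, in which $0$ is the maximum, $2\tl 1\tl 0$, $2\tl 3\tl 0$, and $2\tl j\tl 0$ for every $j\in\{4,\dots,n-1\}$ (equivalently: $2$ is the unique minimum, $0$ is the unique maximum, and the middle states $1,3,4,\dots,n-1$ are mutually incomparable). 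Monotonicity of each letter is a finite check: for instance $a$ permutes the middle antichain and fixes $0,1,2$, so it is trivially monotone; $g$ collapses everything to the middle element $3$, which lies strictly between the minimum $2$ and maximum $0$, and one checks this respects the order; and similarly for $b,c,d,e,f,h$. I would present this as a short table or a sentence per letter rather than grinding each inequality.

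For properness I would argue each of the three exclusions directly from accepted/rejected words, mirroring Proposition~\ref{prop:starproper}. The language is \emph{not suffix-closed} because $\eps\notin L_n$ (state $0$ is non-final) while $L_n$ is nonempty. It is \emph{not a left ideal} because I can find a word $w\in L_n$ and a letter extension placing us outside $L_n$; concretely, a word reaching state $1$ can be prepended with a letter that moves the computation off state $1$, so $L_n\neq\Sigma^*L_n$. It is \emph{not suffix-free} because I can exhibit two words in $L_n$, one a suffix of the other—e.g.\ a short word landing in $1$ and a longer word with the same suffix also landing in $1$, using the self-loops and the return edges to engineer both acceptances.

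The main obstacle is bookkeeping rather than conceptual difficulty: the transition structure mixes a cyclic permutation $a$ on the tail states with several collapsing and shuttling maps $b,c,d,e,f,g,h$ on the head states $\{0,1,2\}$, so both the distinguishability argument and the three properness witnesses require choosing words carefully and tracking a handful of states through composite transformations. I expect the trickiest single point to be pinning down explicit short words witnessing non-suffix-freeness and non-left-ideal-ness, since these require a word in $L_n$ whose manipulation stays controlled; once the reachability words are in hand these should fall out, but they merit an explicit check rather than a hand-wave.
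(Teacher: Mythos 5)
Your overall route is the same as the paper's: minimality by reachability and distinguishability, suffix-convexity by checking that every letter is monotone with respect to $\tle_{\cR_n}$ and invoking Theorem~\ref{thm:mono}, and properness by exhibiting witness words. However, there is a genuine error in the step you call ``the cheap step'': you have misread the order of Definition~\ref{def:revsystem}. In that order, derived from $\cR_n = \{(p,q,p)\}\cup\{(p,q,q)\}\cup\{(0,2,1),(2,0,1)\}$, the only comparabilities are $p \tle 0$ for all $p$ and $2 \tle 1$; state $2$ is \emph{not} below $3,4,\dots,n-1$, so $2$ is not a global minimum and the states $2,3,\dots,n-1$ form part of an antichain together with the covering relation $2 \tl 1$. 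Under the order you describe (with $2 \tl j$ for every $j \ge 3$), the transformation $f\colon(2\to 1)$ is \emph{not} monotone: $2 \tl 3$ but $2f=1$ and $3f=3$ are incomparable middle elements in your order. So the ``finite check'' you defer would fail for $f$ exactly as you have set it up. With the correct order the check does go through for all eight letters (the only constraints involving $2$ are $2\tle 1$ and $2\tle 0$, and $2f=1\tle 1=1f$, $2f=1\tle 0=0f$), so the gap is repairable, but as written your monotonicity argument does not establish suffix-convexity. (Your finer order is also structurally wrong for this witness: Corollary~\ref{cor:reversalbd} forces exactly one comparable pair of non-zero states.)

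The remaining items are fine in outline but left as promises where the paper is concrete: the non-left-ideal and non-suffix-free witnesses are simply $h\in L_n$ with $h^2\notin L_n$, and $bh, h\in L_n$ with $h$ a suffix of $bh$ (using $0b=0$ and $0h=1$, $1h=2$). You should state these explicitly rather than gesturing at ``a word reaching state $1$''; they require no engineering once one notes that $h$ sends $0$ to the final state $1$ and sends $1$ to the non-final state $2$.
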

\begin{proof}
DFA $\cD_n(\Sigma)$ is minimal because every state is reachable by a word in $\{a, b, c, g\}^*$ and any two states are distinguished by a word in $\{a, b, f\}^*$.
By Theorem~\ref{thm:mono}, $L_n$ must be suffix-convex because $\cD_n$ respects the triple system $\cS_n$ of Definition~\ref{def:revsystem}.
It cannot be a left ideal because $h \in L_n$ but $h^2 \not\in L$, it cannot be suffix-closed because $\eps \not\in L_n$, and it is not suffix-free because $bh \in L_n$ and $h \in L_n$; thus $L_n$ is a proper language.\qed
\end{proof}

\begin{theorem}
\label{thm:reversal}
The language stream $(L_n(\Sigma) \mid n \ge 3)$ of Definition~\ref{def:revwitness} meets the upper bound for reversal of proper suffix-convex languages.
That is, for $n \ge 3$, $\kappa(L_n^R) = 2^n - 2^{n-3}$.
\end{theorem}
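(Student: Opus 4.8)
The plan is to count the atoms of $L_n$, using the identity $\kappa(L_n^R) = (\text{number of atoms of } L_n)$ from~\cite{BrTa14}. Recall $A_S = \bigcap_{i \in S} K_i \cap \bigcap_{i \notin S}\overline{K_i}$, and that distinct sets $S$ yield disjoint atoms; hence it suffices to determine for which $S \subseteq Q_n$ the atom $A_S$ is nonempty. Since $F = \{1\}$, a word $v$ lies in $A_S$ exactly when $S = \{p \in Q_n : \delta(p,v) = 1\}$, i.e. when $S$ is the preimage of $F$ under the transformation induced by $v$ (with $v = \eps$ giving $S = \{1\}$). So $\kappa(L_n^R)$ equals the number of distinct sets of this form, and the whole problem reduces to determining which subsets of $Q_n$ arise as $\{p : pv = 1\}$; no separate distinguishability argument is then needed.

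For the upper bound I would invoke Theorem~\ref{thm:reversalbd}, but it is worth recording the exact obstruction. Because $(0,2,1) \in \cR_n$ and $\cD_n$ respects $\cR_n$, we have $K_0 \cap K_2 \subseteq K_1$; consequently $0,2 \in S$ forces $1 \in S$ for any realizable preimage set $S$. The $2^{n-3}$ subsets with $0,2 \in S$ and $1 \notin S$ are therefore not realizable, leaving at most $2^n - 2^{n-3}$ atoms. Call the remaining subsets \emph{admissible}; writing $m_i$ for the indicator of $i \in S$, admissibility says only that the low triple $(m_0,m_1,m_2)$ avoids the pattern $(1,0,1)$, so admissible sets are exactly the $7 \cdot 2^{n-3} = 2^n - 2^{n-3}$ choices of an admissible low triple together with an arbitrary high block $(m_3, \dots, m_{n-1})$.

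The heart of the proof is to realize every admissible $S$. I would build $S$ from the seed $\{1\}$ by reading letters, noting that reading a letter $x$ replaces the current set $T$ by its preimage $x^{-1}(T) = \{p : px \in T\}$. Computing these preimages turns each generator into a simple update of $(m_0, \dots, m_{n-1})$: letter $a$ cyclically rotates the high block; $b$ performs $m_3 \leftarrow m_1$ and $c$ performs $m_3 \leftarrow m_2$; $d$ performs $m_1 \leftarrow m_0$, $e$ performs $m_1 \leftarrow m_2$, and $f$ performs $m_2 \leftarrow m_1$; while $g$ and $h$ are broadcasts that flatten the high block. The key structural observation is that $m_0$ can only be altered by the broadcast $h$ (which also flattens the high block), so the order of operations matters. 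I would first install the desired value of $m_0$: if $m_0^* = 1$, read $h$ from the seed (where $m_1 = 1$, $m_2 = 0$) to reach $\{0\}$ and then $d$ to reach $\{0,1\}$; if $m_0^* = 0$, leave the seed untouched. In either case I now hold $m_1 = 1$ and $m_2 = 0$, giving simultaneous $0$- and $1$-sources, and I load the high block to its target pattern by the standard single-write-port-plus-rotation technique: repeatedly write the next bit into position $3$ (via $b$ or $c$, according to whether that bit is $1$ or $0$) and rotate with $a$, which over $n-3$ steps fills all of $m_3, \dots, m_{n-1}$ while leaving $m_0, m_1, m_2$ fixed. Finally, since $d, e, f$ touch only $\{0,1,2\}$ and never the high block, I would set $(m_1, m_2)$ to the required admissible values, completing $S$.

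The main obstacle is this last reachability argument. The awkward point is that $m_0$ is writable only through the block-clobbering letter $h$, which forces the ordering (set $m_0$ first, rebuild the block afterward) and requires arranging both a $0$-source and a $1$-source inside $\{m_1, m_2\}$ before loading the block; this is delicate when $m_0^* = 0$, since the only surviving $1$ is the one carried from the seed. Verifying the circular-shift loading (that one write port together with full rotation fills the block to any pattern) and checking that the final adjustments of $(m_1,m_2)$ reach all seven admissible low triples without disturbing the block are the routine-but-careful steps. Once every admissible $S$ is realized, the disjointness of atoms gives exactly $2^n - 2^{n-3}$ nonempty atoms, hence $\kappa(L_n^R) = 2^n - 2^{n-3}$.
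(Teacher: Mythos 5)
Your proposal is correct, and it reaches the bound by a route that is recognizably a reorganization of the paper's argument rather than a copy of it. The paper reverses $\cD_n$, determinizes, and for each admissible target set $S$ designs a single word inducing a transformation with prescribed kernel structure --- $(S\to 1)$, $(1\to 0)(S\to 1)$, or $(S\to 1)(\overline{S}\to 2)$, split into four cases on whether $0$ and $1$ lie in $S$ --- and then proves pairwise distinguishability separately using the constant maps $(Q_n\to p)$. You instead count atoms directly as preimages of $F=\{1\}$ and realize each admissible indicator vector incrementally, treating the letters as register updates ($b,c$ as a write port into $m_3$, $a$ as a rotation, $d,e,f$ as low-bit moves, $h$ as the one way to install $m_0$); I checked the update rules against Definition~\ref{def:revwitness} and they are all correct, the seven admissible low triples are indeed all reachable from the held configurations $(0,1,0)$ and $(1,1,0)$ via $d,e,f$ without disturbing the high block, and the write-and-rotate loading of $m_3,\dots,m_{n-1}$ works as claimed. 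Your framing buys you two things: the distinguishability step disappears entirely (distinct non-empty atoms are disjoint, so invoking $\kappa(L^R)=$ number of atoms from~\cite{BrTa14} suffices), and the upper bound comes for free from the single forbidden pattern $(0,2\in S,\ 1\notin S)$ rather than from a separate appeal to Theorem~\ref{thm:reversalbd}. Two small remarks: your claim that $m_0$ can \emph{only} be altered by $h$ is not quite right, since $g$ sets every bit including $m_0$ to $m_3$ --- but you never use $g$, so this mischaracterization is harmless; and you should state explicitly that the obstruction $K_0\cap K_2\subseteq K_1$ follows because $\cD_n$ respects $\cR_n$ (Proposition~\ref{prop:revproper}) together with the implication $(p,q,r)\in\cR\implies K_p\cap K_q\subseteq K_r$ noted after the definition of respecting a triple system.
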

\begin{proof}
We obtain an NFA recognizing $L_n^R$ by reversing every transition in $\cD_n$ and interchanging the initial state 0 with the final state 1.
That is, we have an NFA $\cN = (Q_n, \Sigma, \delta^R, \{1\}, \{0\})$ where $p \in \delta^R(q, \ell) \iff \delta_n(p, \ell) = q$ for all $\ell \in \Sigma$.
Extending this relation to words, we have $p \in \delta^R(q, w) \iff \delta_n(p, w^R) = q$ for all $w \in \Sigma^*$.
We prove that every state $S \subseteq Q_n$ such that $0, 2 \in S \implies 1 \in S$ is reachable in this NFA, and that these states are pairwise distinguishable.
There are four cases for reachability.

\textbf{Case 1.} $0 \not\in S$ and $1 \in S$: Let $w$ be a word that induces $(S \to 1)$ in $\cD_n$; such a word can be found by using $a$ and $b$ to map states in $S \cap \{3, 4, \dots, n-1\}$ to $1$, and then appending $f$ if $2 \in S$. The reversal of $w$ maps the initial state $1$ to $S$ in $\cN$.

\textbf{Case 2.} $0 \not\in S$ and $1 \not\in S$: Use a word that induces $(1 \to 0)(S \to 1)$ in $\cD_n$. This word is given by $dw$, where $w$ is found as in Case 1.

\textbf{Case 3.} $0 \in S$ and $1 \in S$: Use a word $w$ that induces $(S \to 1)(\overline{S} \to 2)$ in $\cD_n$.
This word can be constructed as the concatenation of two parts.
For the first part, begin with $f$ if $2 \in S$, and then use $a$ and $b$ to send the states of $S \cap \{3, 4, \dots, n-1\}$ to $1$.
The second part is simply $dh$.
The first part maps $S \setminus \{0\}$ to $1$, and it leaves all the states of $\overline{S}$ in $Q_n \setminus \{0, 1\}$.
Further applying $dh$ maps $\{0,1\}$ to $1$ and the remaining states to $2$.
Thus, $w$ induces $(S \to 1)(\overline{S} \to 2)$ in $\cD_n$.
The reverse of $w$ maps $1$ to $S$ in $\cN$.

\textbf{Case 4}, $0 \in S$, $1 \not\in S$, and $2 \not\in S$: Use a word that induces $(S \to 1)(\overline{S} \to 2)$ in $\cD_n$, which is given by $ew$, where $w$ is constructed as in Case 3.

No other cases are possible, since we are only interested in sets $S$ where $0, 2 \in S \implies 1 \in S$.

For distinguishability, observe that for each $p \in Q_n\setminus\{0\}$, the transformation $(Q_n \to p)$ is induced in $\cD_n$ by a word $w_p$ in $\{a, b, c, g, h\}^*$.
If $S_1, S_2 \subseteq Q_n$ are distinct sets of states with $p \in S_1 \oplus S_2$, they are distinguished in $\cN$ by $w_p^R$ if $p \not= 0$.
If $p = 0$, then they are distinguished by $\eps$.
Therefore $\kappa(L_n) = 2^n - 2^{n-3}$. \qed
\end{proof}


\section{Syntactic Semigroup}

The final complexity measure we consider is \emph{syntactic complexity}, the size of the syntactic semigroup.
The size and nature of the most complex semigroup in the class of proper suffix-convex languages is an interesting and difficult open question.
We describe a stream that we conjecture to be maximal in this respect.

\begin{definition}
For $n \ge  3$, define $\cS_n = (Q_n, 0, \{n-2\}, \cR_n)$ where
\begin{align*}
\cR_n =  \{&(p, q, p) \mid p, q \in Q_n\} \cup \{(p, q, q) \mid p, q \in Q_n\}\\
\cup &\{(0, p, q) \mid 0 \le p, q \le n-2\} \cup \{(p, 0, q) \mid 0 \le p, q \le n-2\}\\
\cup &\{(0, n-1, q) \mid q \le n-2\} \cup \{(n-1, 0, q) \mid q \le n-2\}.
\end{align*}
\end{definition}

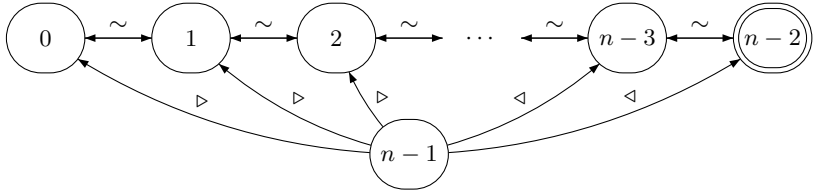
\begin{figure}
\unitlength 11pt
\begin{center}\begin{picture}(25,4)(0,0)
\gasset{Nh=2.4,Nw=2.7,Nmr=1.2,ELdist=0.3,loopdiam=1.2}
\node(0)(0,4){$0$}
\node(1)(5,4){$1$}
\node(2)(10,4){$2$}
\node[Nframe=n](3dots)(15,4){$\cdots$}
\node(n-3)(20,4){$n-3$}
\node(n-2)(25,4){$n-2$}\rmark(n-2)
\node(n-1)(12.5,0){$n-1$}

\drawedge[curvedepth=1,ELpos=56,ELside=r](n-1,0){$\tg$}
\drawedge[curvedepth=0.66,ELpos=50,ELside=r](n-1,1){$\tg$}
\drawedge[curvedepth=0.33,ELpos=45,ELside=r](n-1,2){$\tg$}
\drawedge[curvedepth=-0.66,ELpos=50](n-1,n-3){$\tl$}
\drawedge[curvedepth=-1,ELpos=60](n-1,n-2){$\tl$}

\drawedge(0,1){$\sim$}
\drawedge(1,0){}
\drawedge(1,2){$\sim$}
\drawedge(2,1){}
\drawedge(2,3dots){$\sim$}
\drawedge(3dots,2){}
\drawedge(3dots,n-3){$\sim$}
\drawedge(n-3,3dots){}
\drawedge(n-3,n-2){$\sim$}
\drawedge(n-2,n-3){}
\end{picture}\end{center}
\caption{The order relation $\tle_\cR$ used in the conjectured witness for syntactic semigroup.}
\label{fig:synwitness}
\end{figure}

We can compute a bound on the number of transformations that respect $\cR_n$. Observe that:
\begin{enumerate}[topsep=0pt]
\item If $0$ is fixed by $t$, then $pt \not= n-1$ for all $p \le n-2$ by monotonicity.
\item If $0t \in \{1, 2, \dots, n-2\}$ then $\{1, 2, \dots, n-2\}t = 0t$ since $(0t, 0t, pt)$ must be in $\cR$ for all $p \in Q_n \setminus \{n-1\}$, and this fails unless $pt = 0t$.
\item If $0t = n-1$ then $Q_nt = n-1$ by monotonicity.
\end{enumerate}
By 1, the number of transformations satisfying $0t = 0$ is at most $n (n-1)^{n-2}$.
By 2, the number of transformations satisfying $0t \in \{1, 2, \dots, n-2\}$ is at most $n (n-2)$.
By 3, there is only one transformation where $0t = n-1$.
Thus, the size of the transition semigroup is at most $n(n-1)^{n-2} + (n-1)^2$.
A more careful analysis reveals that every transformation counted by this argument satisfies Conditions 1 and 2,
and thus they all may be added to the transition semigroup.\footnote{This fact is offered without proof, but it is not difficult to verify.}
There is a fairly simple DFA that respects $\cS$ and has this semigroup:

\begin{definition}
\label{def:syntacticwitness}
For $n \ge 3$, let $L_n(\Sigma)$ be the language recognized by the DFA $\cD_n = (Q_n, \Sigma, \delta_n, 0, \{n-2\})$, where $\Sigma = \{a, b, c, d, e, f, g, h\}$ and $\delta_n$ is given by the transformations 
\begin{align*}
a &\colon (1, \dots n-2), & e &\colon (Q_n \setminus \{n-1\} \rightarrow 1),\\
b &\colon (1, 2), & f &\colon (n-1 \rightarrow 0),\\
c &\colon (n-2 \rightarrow 1), & g &\colon (n-1 \rightarrow 1),\\
d &\colon (n-2 \rightarrow 0), & h &\colon (Q_n \rightarrow n-1).
\end{align*}
\end{definition}

The transition semigroup of $\cD_n$ contains every transformation satisfying Condition 1 and Condition 2 with respect to $\cS_n$ of Definition~\ref{def:syntacticwitness}. Hence the size of the syntactic semigroup of $L_n$ is $n(n-1)^{n-2} + (n-1)^2$. We conjecture that this is optimal.

\begin{conjecture}
\label{conj:syntactic}
For $n \ge 3$, the syntactic complexity of any proper suffix-convex language of complexity at most $n$ is at most $n(n-1)^{n-2} + (n-1)^2$.
\end{conjecture}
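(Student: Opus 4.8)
The plan is to reduce the conjecture to a purely combinatorial statement about triple systems. Let $L$ be any proper suffix-convex language with $\kappa(L) = n \ge 3$ and let $\cD = (Q_n, \Sigma, \delta, 0, F)$ be its minimal DFA. By Proposition~\ref{prop:scimprespect}, $\cD$ respects the triple system $\cR = \{(p,q,r) \mid K_p \cap K_q \subseteq K_r\}$, and by Proposition~\ref{prop:realizability} its transition semigroup satisfies $T_\cD \subseteq T^*$, where $T^*$ is the set of all transformations respecting $\cR$. Since the syntactic semigroup of $L$ is isomorphic to $T_\cD$, it suffices to prove that $|T^*| \le n(n-1)^{n-2} + (n-1)^2$ for every triple system $\cR$ that can arise from a proper language. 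The goal thus becomes to maximize the number of respecting transformations over all admissible proper triple systems and to show the maximum is attained by the system used for the conjectured witness of Definition~\ref{def:syntacticwitness}.

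First I would push almost everything onto the induced preorder $\tle_\cR$, using the reformulation of Condition~2 as monotonicity. The key structural fact is that $0$ is a maximum of $\tle$, so for any monotone $t$ the image $\im t$ lies entirely below $0t$; this lets me partition $T^*$ according to the value of $0t$, exactly as in the analysis of the conjectured witness. The three regimes are those where $0t$ is a maximal element (symmetric to $0$), where $0t$ is a non-maximal and non-minimal element, and where $0t$ is minimal. Within each regime I would count monotone transformations in terms of the quotient poset of $\tle$ (its symmetric classes and their order). For the conjectured system the three regimes contribute $n(n-1)^{n-2}$, at most $n(n-2)$, and $1$, summing to $n(n-1)^{n-2} + (n-1)^2$, so the task reduces to showing that no admissible proper shape beats this total.

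The role of properness is what makes the bound finite and equal to the claimed value. If $\tle$ were the trivial order in which $p \tle q$ only when $q = 0$ (an antichain below a single top), then a monotone $t$ with $0t = 0$ is essentially unconstrained and $|T^*|$ would be as large as $n^{n-1}$. However, the last case in the proof of Theorem~\ref{thm:reversalbd} shows that this trivial order forces $K_0 \subseteq K_p$ for all $p$ (so $L$ is suffix-closed) or $K_p \subseteq K_0$ for all $p$ (so $L$ is a left ideal), and hence $L$ is not proper. More generally I would translate each excluded subclass into a structural feature of $\cR$: left ideals and suffix-closed languages into near-trivial preorders with a dominating quotient, and suffix-free languages into systems with a non-returning initial state and an empty state. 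I would then argue that any system admitting strictly more than $n(n-1)^{n-2} + (n-1)^2$ respecting transformations must exhibit one of these features, contradicting properness. In effect, properness forces $\tle$ to carry enough relations, and $\cR$ to carry enough extra triples beyond (A) and (B) for Condition~1 to have teeth (as in the extremal system, where $(0,0,p) \in \cR$ lets Condition~1 pin down $(0t,0t,pt) \in \cR$ and thereby collapse the middle regime), to cut the count down to the extremal value.

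The hard part, and the reason the statement is only conjectured, is this middle optimization: bounding the number of transformations respecting an arbitrary proper triple system. Counting monotone self-maps of a preorder is already delicate and highly sensitive to the poset shape, and here one must simultaneously respect Condition~1, which is strictly stronger than monotonicity and interacts with the final-state property (D) in ways not visible to $\tle$ alone. A complete proof would require either a clean compression argument showing that merging symmetric classes and flattening the quotient poset toward the ``one large top class plus one minimal state'' shape never decreases the count while preserving properness, or an organized exhaustive case analysis over all possible shapes of the quotient poset. Establishing that this flattening is monotone in the count, and that the unique shape surviving the properness constraints is precisely that of the conjectured system, is the obstacle that remains open.
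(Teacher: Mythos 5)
The statement you are trying to prove is stated in the paper as a \emph{conjecture}: the paper itself offers no proof, only a witness stream (Definition~\ref{def:syntacticwitness}) whose syntactic semigroup has exactly $n(n-1)^{n-2}+(n-1)^2$ elements, establishing the lower bound. Your proposal is therefore not being measured against an existing argument, and on its own terms it is not a proof: it is a reduction plus a plan. The reduction is sound --- by Proposition~\ref{prop:scimprespect} the minimal DFA respects the canonical triple system $\cR=\{(p,q,r)\mid K_p\cap K_q\subseteq K_r\}$, by Proposition~\ref{prop:realizability} we have $T_\cD\subseteq T^*$, and your three-regime count for the conjectured system correctly reproduces the paper's tally $n(n-1)^{n-2}+n(n-2)+1=n(n-1)^{n-2}+(n-1)^2$. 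But the entire content of the conjecture lives in the step you defer: showing that \emph{every} triple system arising from a proper language admits at most that many respecting transformations. You acknowledge this yourself in your final paragraph, so there is no disguised claim of completeness, but there is also no proof.

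Two further cautions on the plan itself. First, properness is a property of the language, not of any single triple system it respects; a given proper DFA may respect several systems, and the canonical one need not have a ``shape'' cleanly separated from those of left ideals or suffix-closed languages (the paper's own remark that a left ideal stream achieves syntactic complexity $n^{n-1}+n-1$, strictly above $\Phi(n)$, shows how much the bound depends on excluding these classes, and your proposed translation of each excluded subclass into a structural feature of $\cR$ is itself a nontrivial lemma you have not supplied). Second, your hoped-for compression argument (``flattening the quotient poset never decreases the count'') is in tension with what the paper's appendix actually establishes: the proof of Theorem~\ref{thm:threewit} carries out precisely the kind of pod-by-pod case analysis you envision, but only under the strong extra hypothesis that $L$ meets the star bound (so that $\tle_\cR$ is total up to symmetry), and even there the analysis is long and only yields $|T_\cD|<\Phi(n)$, not a classification of the extremal shape. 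Without either the compression lemma or an exhaustive analysis over arbitrary preorders with (C) and (D) in play, the conjecture remains exactly that.
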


Note that the conjectured bound does not hold for general suffix-convex languages, as there is a left ideal stream with syntactic complexity $n^{n-1} + n-1$~\cite{BrYe11}.
This witness is known to have maximal syntactic complexity among left ideals and suffix-closed languages. If Conjecture~\ref{conj:syntactic} holds, it would imply that the left ideal witness has the largest syntactic complexity over all suffix-convex languages, since suffix-free languages are known to have smaller syntactic complexity~\cite{BrSz15a}.

\section{Most Complex Streams}

A \emph{most complex} language stream is required to meet all the operational bounds for reversal, star, product, and boolean operations, as well as the bound for syntactic complexity.\footnote{Additionally, it is usually required that the atoms of the language are as complex as possible~\cite{Brz13}, but this measure is not discussed here.}
Using results already stated in this paper, we can easily show that there is no most complex proper stream.

\begin{lemma}
\label{lem:twowit}
For $n \ge 4$, there does not exist a proper suffix-convex language of complexity $n$ that meets the complexity bounds for both reversal and star.
\end{lemma}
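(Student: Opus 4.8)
The plan is to derive a contradiction from the two structural results already established for the star and reversal bounds: meeting the star bound forces $\tle_\cR$ to be ``wide'' (everything comparable), while meeting the reversal bound forces it to be ``narrow'' (almost nothing comparable among the non-zero states), and these are incompatible once there are enough states. Concretely, I would suppose for contradiction that such a language $L$ exists for some $n \ge 4$ and fix a minimal DFA $\cD = (Q_n, \Sigma, \delta, 0, F)$ for $L$. By Proposition~\ref{prop:scimprespect}, $\cD$ respects the canonical triple system $\cR = \{(p, q, r) \mid K_p \cap K_q \subseteq K_r\}$. The key setup point is that both Lemma~\ref{lem:starreq} and Corollary~\ref{cor:reversalbd} are stated for \emph{any} triple system respected by the minimal DFA, so choosing this common $\cR$ lets me apply both conclusions to the same relation $\tle_\cR$.

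Next I would extract the two conclusions. Since $\kappa(L^*) = 2^{n-1} + 2^{n-2}$, Lemma~\ref{lem:starreq} gives that every pair of states in $Q_n$ is comparable under $\tle_\cR$. Since $\kappa(L^R) = 2^n - 2^{n-3}$, Corollary~\ref{cor:reversalbd} gives that $\tle_\cR$ is a partial order and that there is exactly one pair of distinct non-zero states that are comparable. Combining these two facts, $\tle_\cR$ is a partial order in which every pair is comparable, that is, a total order on $Q_n$.

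I would finish by counting comparable pairs among the non-zero states. In a total order on $Q_n$ every two distinct states are comparable, so the $n-1$ non-zero states $\{1, \dots, n-1\}$ contribute $\binom{n-1}{2}$ comparable pairs. For $n \ge 4$ we have $n-1 \ge 3$, hence $\binom{n-1}{2} \ge 3 > 1$, contradicting the conclusion of Corollary~\ref{cor:reversalbd} that there is exactly one such pair. This completes the contradiction and proves the lemma.

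The argument is essentially immediate given the two preceding results, so I do not expect a serious obstacle; the only care required is to confirm that the lemma and the corollary may legitimately be invoked for a single triple system (handled by using the canonical $\cR$ of Proposition~\ref{prop:scimprespect}) and to verify the counting. This counting is precisely where the hypothesis $n \ge 4$ enters: it guarantees $n-1 \ge 3$ non-zero states and therefore at least three comparable pairs, whereas at $n = 3$ there are only two non-zero states and a single comparable pair, so the star and reversal requirements are then consistent and no contradiction arises.
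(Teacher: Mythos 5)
Your proposal is correct and follows essentially the same route as the paper: both apply Corollary~\ref{cor:reversalbd} and Lemma~\ref{lem:starreq} to a common triple system respected by the minimal DFA and derive a contradiction between ``exactly one comparable pair of non-zero states'' and ``all pairs comparable'' once $n \ge 4$. Your explicit choice of the canonical system from Proposition~\ref{prop:scimprespect} and the $\binom{n-1}{2} \ge 3 > 1$ count merely spell out details the paper leaves implicit.
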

\begin{proof}
Suppose $L$ is a proper suffix-convex language of complexity $n$ with $\kappa(L^R) = 2^n - 2^{n-3}$ and $\kappa(L^*) = 2^{n-1} + 2^{n-2}$.
Let $\cD = (Q_n, \Sigma, \delta, 0, F)$ be a minimal DFA for $L$.
By Corollary~\ref{cor:reversalbd}, any triple system $\cR$ respected by $\cD$ must have only two states (besides 0) that are comparable by $\tle_\cR$.
Yet by Lemma~\ref{lem:starreq}, every pair of states $(p, q)$ must have some comparison in $\tle_\cR$ (either $p \tle q$ or $q \tle p$).
This is impossible for $n \ge 4$.
\qed
\end{proof}

Surprisingly, even though the true upper bound for syntactic complexity is not known with certainty, we can still prove that a third stream, different from those for reversal and star, is needed to meet this bound.
Thus, at least three streams are needed to meet all the bounds.
\begin{theorem}
\label{thm:threewit}
For $n \ge 4$, there does not exist a proper suffix-convex language of complexity $n$ that meets the upper bounds for any two of reversal, star, and syntactic complexity.
\end{theorem}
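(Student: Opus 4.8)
The plan is to treat the three pairs separately. The pair \emph{(reversal, star)} is already settled by Lemma~\ref{lem:twowit}, so it remains to rule out \emph{(reversal, syntactic)} and \emph{(star, syntactic)}. For both I would show that a witness for reversal (respectively star) is forced to have syntactic complexity strictly below the bound $n(n-1)^{n-2}+(n-1)^2$, so it cannot simultaneously be a syntactic witness. The common tool is that the syntactic complexity of $L$ equals $|T_\cD|$, that every $t\in T_\cD$ respects the triple system $\cR$ that $\cD$ is assumed to respect, and in particular (Condition 2) is monotone with respect to $\tle_\cR$. Hence $|T_\cD|$ is bounded above by the number of monotone self-maps of $(Q_n,\tle_\cR)$, and it suffices to control the shape of $\tle_\cR$.

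For \emph{(reversal, syntactic)} I would invoke Corollary~\ref{cor:reversalbd}: if $L$ meets the reversal bound, then $\tle_\cR$ is a partial order in which $0$ is a strict maximum and exactly one pair of non-zero states is comparable, say $p\tl q$. The poset is therefore a $3$-chain $p\tl q\tl 0$ together with $n-3$ further states lying strictly below $0$ and incomparable to everything else. Counting the monotone self-maps of this poset by cases on the image of $0$ gives a total dominated by $2n^{n-2}$ (plus lower-order terms in $2^{n}$ and $n$), and a direct comparison shows this is strictly less than $n(n-1)^{n-2}+(n-1)^2$ for every $n\ge 4$. Thus any reversal witness has $|T_\cD|$ below the syntactic bound, and the two bounds cannot both be met. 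This step is routine.

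The hard case is \emph{(star, syntactic)}, which I expect to be the main obstacle. Lemma~\ref{lem:starreq} gives that $\tle_\cR$ is total (every pair is comparable), but this alone is \emph{not} enough: the order of the conjectured syntactic witness (Figure~\ref{fig:synwitness}) is itself total, so the monotone count for a total preorder can exceed the syntactic bound whenever a large symmetric class is present. The plan must therefore use the finer data of the star subset construction rather than $\tle_\cR$ alone. Concretely, I would assume both bounds are met and try to show that attaining the dominant term $n(n-1)^{n-2}$ of the syntactic bound forces a symmetric class of size $n-1$ at the top of $\tle_\cR$; since $p\sim 0$ means $K_0=L\subseteq K_p$, this forces $iw=f$ for every state $i$ in that class and every $w\in L$. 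I would then argue that this rigidity collapses the behaviour of too many of the $2^{n-1}+2^{n-2}$ subsets that Lemma~\ref{lem:starreq} requires to be reachable and pairwise distinguishable in the star NFA, yielding the contradiction.

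The delicate point, and where I expect the real work to lie, is that a symmetric pair $p\sim q$ does \emph{not} by itself collapse the singletons $\{p\}$ and $\{q\}$. Writing $\Lambda_s=K_sL^*$ for the language accepted from state $s$ in the star NFA, one checks that $p\sim q$ (together with $pt\sim qt$ for all $t\in T_\cD$, by monotonicity) only forces $K_p$ and $K_q$ to agree on $\Sigma^*L$, while minimality forces them to differ off $\Sigma^*L$; since any nonempty $L^*$-factor lands one back in $\Sigma^*L$, this gives $\Lambda_p\ne\Lambda_q$ and keeps $\{p\}$ and $\{q\}$ distinguishable. The contradiction must therefore come from the interaction of the entire symmetric class with the reachability of the \emph{larger} subsets, not from any single pair. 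Quantifying this, namely showing that the number of reachable and distinguishable subsets drops below $2^{n-1}+2^{n-2}$ precisely when the symmetric structure needed for the syntactic bound is present, is the crux of the argument.
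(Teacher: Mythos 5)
Your treatment of the pairs (reversal, star) and (reversal, syntactic) matches the paper: the first is exactly Lemma~\ref{lem:twowit}, and for the second the paper likewise counts the monotone self-maps of the poset forced by Corollary~\ref{cor:reversalbd}, obtaining $2n^{n-2}+3\cdot 2^{n-3}+n-2 < n(n-1)^{n-2}+(n-1)^2$. The gap is in the pair (star, syntactic), which you rightly identify as the crux but for which your plan would not go through. You propose to show that near-maximal syntactic complexity forces a symmetric class of size $n-1$ \emph{containing} $0$ and then to exploit $K_0\subseteq K_p$. Two problems. First, that implication is unproven and amounts to a structural characterization of near-optimal syntactic witnesses, i.e., it is at least as hard as Conjecture~\ref{conj:syntactic} itself. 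Second, and more fundamentally, the dangerous configuration is a symmetric class of size $n-1$ \emph{excluding} $0$: with pods $\{0\}$ and $P_1=\{1,\dots,n-1\}$, monotonicity alone permits on the order of $(n-1)^n$ transformations, which exceeds $\Phi(n)=n(n-1)^{n-2}+(n-1)^2$, yet pairwise symmetry of non-zero states yields no quotient containment ($p\sim q$ only gives $K_0\cap K_p\subseteq K_q$ and $K_0\cap K_q\subseteq K_p$), so the $K_0\subseteq K_p$ lever is unavailable there. Your own closing observation --- that $p\sim q$ does not collapse $\{p\}$ and $\{q\}$ in the star NFA --- is precisely why this case cannot be dispatched by the route you sketch.

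The paper runs the implication in the opposite direction: it constrains the DFA using the star bound and then bounds $|T_\cD|$ by $\Phi(n)$, which suffices even though the true syntactic bound is only known to be at least $\Phi(n)$. The ingredients you are missing are: (i) Lemma~\ref{lem:noppq}, that the star bound forbids $(p,p,q)\in\cR$ for distinct $p,q$ --- proved by counting quotients of $L^*$ (any containment among quotients of $L$ beyond $K_0\subseteq K_f$ drops $\kappa(L^*)$ below $2^{n-1}+2^{n-2}$, and the residual case $0\sim f$ is excluded by a reachability argument in the star NFA); (ii) a decomposition of $Q_n$ into totally ordered ``pods'' of pairwise symmetric states with a case analysis on the maximum pod size $m$; and (iii) for the hard case $m=n-1$, Lemma~\ref{lem:hij}, which extracts from the star-reachability of the full set $Q_n$ a transformation $t$ with $0t\in P_1$ and $\im(t)=P_1$ whose orbit of $0$ imposes a hidden linear order on the pod, collapsing the count of admissible transformations to at most $2e(n-1)!<\Phi(n)$. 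Even the intermediate case $2\le m\le n-2$ needs a nontrivial three-group count, including a Cayley-formula argument over rooted trees for the transformations that move $0$. Without these steps the (star, syntactic) case remains open in your write-up.
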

The  proof of this theorem is considerably longer and more difficult than the previous proofs. See Appendix~\ref{app:threewit}.

\section{Conclusion}

We have exhibited several new tight upper bounds for proper suffix-convex languages, some of which apply to all suffix-convex languages.
The introduction of triple systems was an essential tool in this endeavour, so perhaps variant triple systems can be developed for other difficult classes of regular languages.
The question of determining the maximal syntactic complexity of proper languages remains open. It may be solvable using a similar approach to the proof of Theorem~\ref{thm:threewit}.

\section*{Acknowledgement}
This work arose from a fruitful collaboration with Janusz Brzozowski, without which it would not have been possible. I am extremely grateful for his guidance and mentorship.

\bibliography{SC}
\bibliographystyle{splncs03}


\newpage
\appendix

\section{Proof of Theorem~\ref{thm:threewit}}
\label{app:threewit}

\begin{rethm}{Theorem~\ref{thm:threewit}}
{For $n \ge 4$, there does not exist a proper suffix-convex language of complexity $n$ that meets the upper bounds for any two of reversal, star, and syntactic complexity.}
\end{rethm}
\begin{proof}
From the previous sections, we know that the bound for reversal is $2^n - 2^{n-3}$, the bound for star is $2^{n-1} + 2^{n-2}$, and the bound for syntactic complexity is at least $\Phi(n) = n(n-1)^{n-2} + (n-1)^2$ (though it could be larger if Conjecture~\ref{conj:syntactic} is not true).
Lemma~\ref{lem:twowit} shows that no suffix-convex language can meet the bounds for both reversal and star.

By Corollary~\ref{cor:reversalbd}, every suffix-convex language that meets the reversal bound respects a triple system with a very restricted structure.
In particular, the syntactic semigroup of such a language can be no larger than the number of monotone transformations on $\tle_\cR$ of Definition~\ref{def:revsystem}.
Some careful counting shows that there exactly $2n^{n-2} + 3 \cdot 2^{n-3} + n-2$ such functions.
One can verify by computation that $2n^{n-2} + 3 \cdot 2^{n-3} + n-2 < \Phi(n)$ for $n \in \{4, 5, 6, 7\}$. If $n \ge 8$ then
\begin{align*}
2n^{n-2} + 3 \cdot 2^{n-3} + n-2 &\le 2\left(1+\frac{1}{n-1}\right)^{n-1}\frac{(n-1)^{n-1}}{n} + 2^{n-1}\\
&\le 2e\frac{(n-1)^{n-1}}{n} + 2^{n-1}\\
&\le (2e + 2)(n-1)^{n-2}\\
&\le n(n-1)^{n-2} < \Phi(n).
\end{align*}
In the second step above we used the fact that $\left(1+\frac{1}{x}\right)^x$ increases to $e$ as $x \to \infty$.
Thus, for $n \ge 4$, any witness for reversal cannot be a witness for syntactic complexity.
It remains to prove that no proper language can meet the bounds for both star and syntactic semigroup.

Let $L$ be a proper suffix-convex language of complexity $n$ with $\kappa(L^*) = 2^{n-1} + 2^{n-2}$ and let $\cD = (Q_n, \Sigma, \delta, 0, \{f\})$ be a minimal DFA for $L$.
We can be sure that $\cD$ has only one final state, since this is true of any regular language that meets the bound for star.
Lemma~\ref{lem:starreq} states that $\cD$ respects a triple system $\cS = (Q_n, 0, \{f\}, \cR)$ such that $p \tle_\cR q$ or $q \tle_\cR p$ for all $p, q \in Q_n$.
We can further restrict the structure of $\cS$.

\begin{lemma}
\label{lem:noppq}
If $n \ge 3$, then $(p, p, q) \not\in \cR$ for all distinct $p, q \in Q_n$.
\end{lemma}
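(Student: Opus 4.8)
The plan is to argue by contradiction: assume $(p,p,q)\in\cR$ for some distinct $p,q\in Q_n$ and show that this forces two of the candidate subsets in the star construction to coincide, or else forces $L^*$ to be far too simple. The only consequence of the hypothesis I will need is the implication noted after the definition of ``respects'': since $\cD$ respects $\cR$, we have $(p,p,q)\in\cR\implies K_p\cap K_p\subseteq K_q$, that is, $K_p\subseteq K_q$. I would then reuse the $\eps$-NFA $\cN^*$ for $L^*$ from the proof of Lemma~\ref{lem:starreq}, writing $M_s$ for the language accepted from state $s$ in $\cN^*$. Because the only accepting state reachable from within $Q_n$ is $f$ and the only $\eps$-transition is $f\to 0$, one checks that $M_s=K_sL^*$, and the right language of any ($\eps$-closed) reachable subset $S\subseteq Q_n$ is $\bigcup_{s\in S}M_s$. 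Hence $K_p\subseteq K_q$ gives $M_p\subseteq M_q$, and for any candidate subset $S$ with $q\in S$ and $p\notin S$ we get $\bigcup_{s\in S\cup\{p\}}M_s=\bigcup_{s\in S}M_s$; that is, $S$ and $S\cup\{p\}$ are \emph{indistinguishable}.

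Recall from Lemma~\ref{lem:starreq} that, since $L$ meets the star bound with the single final state $f$, the $2^{n-1}+2^{n-2}$ reachable and pairwise distinguishable subsets are exactly $\{0'\}$, the nonempty $S\subseteq Q_n\setminus\{f\}$, and the $S$ with $\{0,f\}\subseteq S$. The core of the proof is to produce, from the pair $(p,q)$, a \emph{distinct} pair of these candidate subsets $S$ and $S\cup\{p\}$ satisfying $q\in S$, $p\notin S$; their indistinguishability then contradicts that all candidates are pairwise distinguishable. I would dispatch the configurations in turn: if $p,q\neq f$, take $S=\{q\}$; if $q=f$ and $p\neq 0$, take $S=\{0,f\}$; and if $p=f$ (so $q\neq f$), take $S=\{0,q\}$ so that $S\cup\{f\}\supseteq\{0,f\}$. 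In each case both sets are legitimate candidates and $p\notin S$, giving the contradiction.

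The single configuration not covered by this pairing — and the main obstacle — is $p=0$ and $q=f$: here any candidate subset containing $f=q$ is forced to contain $0=p$, so no suitable $S$ exists. I would handle this case by extracting more from $K_0\subseteq K_f$. Since $K_0=L$ and $K_f=\{w\mid fw=f\}$, the containment says that $fv=f$ for every $v\in L$. Therefore, for any $u,v\in L$ we have $0(uv)=(0u)v=fv=f$, so $uv\in L$; that is, $L$ is closed under concatenation, whence $L^{+}=L$ and $L^{*}=L\cup\{\eps\}$. Consequently $\kappa(L^{*})\le \kappa(L)+1=n+1$, which is strictly less than $2^{n-1}+2^{n-2}=3\cdot 2^{n-2}$ for all $n\ge 3$, contradicting the assumption that $L$ meets the star bound. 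This completes the final case, and with it the lemma. The only delicate points to verify carefully are the identity $M_s=K_sL^*$ (so that $M_p\subseteq M_q$ is justified) and the routine claim $\kappa(L\cup\{\eps\})\le n+1$; the rest is the bookkeeping of the three easy pairings above.
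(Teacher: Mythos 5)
Your proof is correct. For the main part --- ruling out $(p,p,q)\in\cR$ except possibly for $(p,q)=(0,f)$ --- you take essentially the paper's route: the paper observes that every quotient of $L^*$ has the form $\bigcup_{s\in S}K_sL^*$ and that any containment between quotients of $L$ other than $K_0\subseteq K_f$ drops the number of distinct such unions below $2^{n-1}+2^{n-2}$; your explicit pairing of candidate subsets $S$ and $S\cup\{p\}$ is precisely the bookkeeping behind that one-line assertion, and your three configurations cover all cases (you implicitly use $0\ne f$, which is legitimate since a suffix-convex language containing $\eps$ is suffix-closed, hence not proper). Where you genuinely diverge is the exceptional case $p=0$, $q=f$: the paper shows by induction that $0\sim f$ forces every reachable subset in the star NFA to consist of pairwise symmetric states, so that $Q_n$ is unreachable for $n\ge 3$; you instead note that $K_0\subseteq K_f$ gives $fv=f$ for all $v\in L$, hence $L$ is closed under concatenation, $L^*=L\cup\{\eps\}$, and $\kappa(L^*)\le n+1<2^{n-1}+2^{n-2}$. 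Your version of this case is shorter, more elementary, and yields a stronger conclusion (linear rather than merely sub-maximal complexity of $L^*$), while the paper's version stays inside the monotonicity and triple-system framework that the rest of the appendix relies on. Both arguments are sound; the two delicate points you flagged ($M_s=K_sL^*$ and $\kappa(L\cup\{\eps\})\le n+1$) both check out.
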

\begin{proof}
Observe that \[w^{-1}L^* = \bigcup_{\substack{w = uv\\ u \in L^*}} v^{-1}L^*.\]
In particular, every quotient of $L^*$ is a union of quotients of $L$ concatenated with $L^*$.
We may therefore write a quotient of $L^*$ in the form $J_S \coloneqq \bigcup_{p \in S} K_p L^*$ for $S \subseteq Q_n$.
Since $\eps \in K_f$ and $K_0 = L$, we have $K_0L^* \subseteq K_fL^*$.
Hence if $f \in S$ then $J_{S \setminus \{0\}} = J_{S \cup \{0\}}$.
This proves that there can only be $\frac{3}{4}2^n = 2^{n-1} + 2^{n-2}$ distinct quotients of $L^*$ (for all regular languages, not just suffix-convex ones).
Moreover, observe that if there is any containment between quotients other than $K_0 \subseteq K_f$, then this bound is reduced below $2^{n-1} + 2^{n-2}$.
Thus $(p, p, q) \not\in \cR$ for all $p \not= q$, except possibly when $p=0$ and $q = f$.

Now suppose to a contradiction that $(0, 0, f) \in \cR$.
Then $0 \sim f$, and no other state is symmetric with $0$ or $f$.
Perform the NFA construction for star on $\cD$: Add a new initial and final state $0'$ to $\cD$ with the same transitions as $0$, and add an $\eps$-transition from $f$ to $0$.
Since $L$ meets the bound for star, all of the following sets must be reachable in this $\eps$-NFA.
\[\{\{0'\}\} \cup \{P \subseteq Q_n \mid P \not= \emptyset \text{ and } (f \in P \implies 0 \in P)\}\]
However, this implies that some states are not reachable in the $\eps$-NFA:
We prove inductively that the states in any reachable set are pairwise symmetric.
This is certainly true of the initial state $\{0\}$. By monotonicity, any set of symmetric states in $\cD$ can only be mapped to another set of symmetric states.
The only way for the reachable sets to grow in size is by mapping onto a set containing $f$, at which point $0$ is added by the $\eps$-transition, however this cannot violate the symmetry between states because $0 \sim f$.
Since $n \ge 3$ there is some state that is not symmetric with $0$, and hence not all states are reachable.
This contradicts the complexity of $L^*$ and proves the lemma.\qed
\end{proof}

We have now established that $\tle$ must look like an ordered sequence of ``pods'' of pairwise symmetric elements, and $0$ must be in a pod by itself.
Let $P_0, P_1, \dots, P_\ell$ be the pods, where $P_0 = \{0\}$ and $p \tg q$ if and only if $p \in P_i$ and $q \in P_j$ with $i < j$.
Without loss of generality, relabel the states so that $0 \tge_\cR 1 \tge_\cR 2 \tge_\cR \cdots \tge_\cR n-2 \tge_\cR n-1$;
hence $P_1 = \{1, 2, \dots, |P_1|\}$, $P_2 = \{|P_1|+1, \dots, |P_1| + |P_2|\}$, $\dots$, and $P_\ell = \{n-|P_\ell|, \dots, n-1\}$.
Let $m = \max_{1 \le i \le \ell} |P_i|$, and let $i^* \in \{1, \dots, \ell\}$ such that $|P_{i^*}| = m$.
We have four cases based on the value of $m$.

\noindent\textbf{Case 1: $m = 1$.}\\
If $m=1$ there is no pair of symmetric elements. Then $\tle$ is a total order, just as in the system of Definition~\ref{def:starsystem}.
Here, every transformation in $T_\cD$ must monotone with respect to the total order.
It is well known, and not too hard to show, that there are exactly ${2n-1}\choose n$ functions on $Q_n$ that are monotone with respect to the total order $\le$.
We must show ${{2n-1} \choose n} < \Phi(n)$ for $n \ge 4$.
This is easily checked by hand for $n = 4, 5, 6$. If $n \ge 7$, then we have
\[ {{2n-1} \choose n} \le \frac{(2n)!}{(2n) n!(n-1)!} \le \frac{(2^n n!)^2}{2 n! n!} = \frac{1}{2}4^n \le 6^{n-1} < \Phi(n).\]

\noindent\textbf{Case 2: $2 \le m \le n-2$.}\\
We estimate the number of possible transformations in $T_\cD$ in three groups.

\noindent\emph{Group 1:} $\{t \in T_\cD \mid 0t = 0, pt \not=0 \text{ for all } p \not= 0\}$.\\
This group contains the transformations $t$ that fix $0$ and do not map any other state to $0$.
By monotonicity, every transformation must map $P_{i^*}$ into some pod $P_j$. Even if there are no restrictions on how the remaining $n-1-m$ states can be mapped, this observation gives the upper bound of $(n-1)^{n-1-m}\sum_{j=1}^\ell |P_j|^m$ transformations in this group.
The sum in this expression is maximized when the states of $Q_n$ are concentrated into as few pods as possible, since if $|P_i| \le |P_j|$ then $|P_i|^m + |P_j|^m < (|P_i| - \eps)^m + (|P_j + \eps)^m$ for any $\eps > 0$.
No pod can have size larger than $m$; hence this maximum occurs when as many pods as possible have size $m$ and the left over states are put into a single pod.
Thus,
\begin{align*}
(n-1)^{n-m}\sum_{j=1}^\ell |P_j|^m \le &(n-1)^{n-1-m}\left(\left\lfloor \frac{n-1}{m}\right\rfloor m^m + \left(\left(\frac{n-1}{m} - \left\lfloor\frac{n-1}{m}\right\rfloor\right) m \right)^m\right)\\
 \le &(n-1)^{n-1-m}\left(\frac{n-1}{m}\right) m^m\\
 = &(n-1)^{n-m}m^{m-1}.
\end{align*}

\noindent\emph{Group 2:} $\{t \in T_\cD \mid 0t = 0, pt =0 \text{ for some } p \not= 0\}$.\\
Here we have all transformations that fix $0$ and also map some non-zero state to $0$.
By monotonicity, if any state $p$ in some pod $P_j$ is mapped to $0$ by $t$, then $P_jt = \{0\}$, and in fact $P_it = \{0\}$ for all $i \le j$. Thus, the number of transformations $t$ that map some non-zero state to $0$ is at most
\[\sum_{j=1}^\ell (n-1)^{n-1 - \sum_{1 \le i \le j} |P_i|}.\]
This expression is maximized when $|P_1| = |P_2| = \cdots = |P_{\ell-1}| = 1$, and $|P_\ell| = n-\ell$.
In this case, $P_\ell$ must have size $m$ since all other pods have size $1$.
Thus, it is bounded by \[\sum_{j=1}^{\ell-1} (n-1)^{n-1-j} + 1 \le (n-1)^{n-2} + (n-1)^{n-3} + \cdots + (n-1)^m + 1.\] For $n \ge 4$, this sum is bounded by $\frac{3}{2} (n-1)^{n-2}$.

\noindent\emph{Group 3:} $\{t \in T_\cD \mid 0t \not= 0\}$.\\
Finally, we have the group containing transformations that do not fix $0$.
This group requires some deeper analysis than just using monotonicity.
Naively, there can only be $(n-1)^n$ transformations in this group, however this is too much of an overestimate.
We reduce this using the fact that $(p, p, q) \not\in \cR$ for all distinct $p, q \in Q_n$.

Let $t$ be a transformation in this group, and consider the sequence $0t, 0t^2, 0t^3, \dots, 0t^n$.
Since these $n$ states all lie in $Q_n \setminus \{0\}$, the sequence must enter a cycle at some point and repeat a state.
Choose $i \in \{1, \dots, n-1\}$ as small as possible such that $0t^n = 0t^i$.
By monotonicity, and since $\tle$ admits a comparison between each pair of states, we have $0t^y \tle 0t^x$ whenever $x \le y$. In particular, if $j \in \{i+1, \dots, n-1\}$, then $(0, 0t^{n-i}, 0t^{j-i}) \in \cR$.
By Condition 1, it follows that $(0t^i, (0t^{n-i})t^i, (0t^{j-i})t^i) = (0t^i, 0t^i, 0t^j) \in \cR$.
By Lemma~\ref{lem:noppq}, it must be that $0t^i = 0t^j$.
Thus, $0t^i = 0t^{i+1} = 0t^{i+2} = \cdots = 0t^n$.
In other words, $0t^i$ is a fixed point of $t$.

Now let $p$ be a state with $p \tle 0t^i$.
Since $(0, 0t^i, p) \in \cR$ and $0t^{i+1} = 0t^i$, we have $(0t^i, 0t^i, pt^i) \in \cR$ by Condition 1, and hence $pt^i = 0t^i$.
This implies that every state below $0t^i$ is mapped to state $0t^i$ by sufficiently large powers of $t$.
If $0t^i \tle p \tle$, then by monotonicity $0t^i \tle pt^i \tle 0t^i$; hence, $pt^i \sim 0t^i$.
Then $(0, 0t^i, pt^i) \in \cR$, which implies that $(0t^i, 0t^{2i}, pt^{2i}) = (0t^i, 0t^i, pt^{2i}) \in \cR$ by Condition 1, and thus $pt^{2i} = 0t^i$.
Therefore, $t$ has exactly one fixed point, and every state must eventually be mapped to this state by a sufficiently large power of $t$.

Let us count the number of transformations with this property.
Suppose $r \in Q_n \setminus \{0\}$ is the fixed point of $t$.
Construct a directed graph on $Q_n\setminus \{0\}$ by adding an edge $(p, pt)$ for each state $p \not= r$.
This graph encodes the behaviour of $t$ on $Q_n \setminus \{0\}$.
Every vertex except $r$ has out-degree 1, and since there is some constant $k$ such that $pt^k = r$ for all $p \in Q_n \setminus \{0\}$, the graph contains no cycles.
Thus, the graph is a directed tree rooted at $r$.

The number of such trees is well-known.
By Cayley's formula, there are exactly $k^{k-2}$ labelled, unordered trees on $k$ elements\footnote{\emph{Labelled} means that the vertices of the trees are labelled with the integers from $1$ to $k$. \emph{Unordered} means that the neighbours of a vertex have no particular order.}.
From this, we may deduce that there are exactly $k^{k-1}$ labelled, unordered, and rooted trees on $k$ elements, simply because each unrooted tree can be made into $k$ different rooted trees depending on which vertex is chosen to be the root.

We established that the behaviour of $t$ on $Q_n \setminus \{0\}$ can be represented as such a graph.
Thus, there are at most $(n-1)^{n-2}$ ways for $t$ to act on $Q_n \setminus \{0\}$.
Assuming that $0$ can be mapped anywhere in $Q_n \setminus \{0\}$, we conclude that there are at most $(n-1)^{n-1}$ transformations in this group.

This bound is close, but not quite sufficient for our purposes.
If the fixed point of some transformation $t$ in this group lies in $P_1$, then $t$ cannot map any state into $P_\ell$, for then $P_\ell t\subseteq P_\ell$ and some states will never reach the fixed point of $t$.
If the fixed point does not lie in $P_1$, then no state can be mapped into $P_1$ by monotonicity.
In either case, there is some state $p \not= 0$ that cannot be in the image of $t$, and $p$ depends only on the fixed point.
As above, there are at most $(n-2)^{n-3}$ possible behaviours of $t$ on $Q_n \setminus \{0, p\}$.
Hence we have the bound $(n-2)^{n-1}$ as $0$ and $p$ might be mapped to any states in $Q_n \setminus \{0, p\}$.

\noindent\emph{Combining Groups:}\\
Adding the three estimates together, we obtain the bound
\[|T_\cD| \le (n-2)^{n-1} + m^{m-1}(n-1)^{n-m} + \frac{3}{2} (n-1)^{n-2}.\]

We prove that this bound is strictly less than $\Phi(n)$ for $n \ge 4$ and $2 \le m \le n-2$.
One can verify this statement by computation for $4 \le n \le 16$ and $2 \le m \le n-2$.

By calculus, $\left(1 - \frac{1}{n-1}\right)^{n-1}$ is increasing with $n$, and it is known to converge to $\frac{1}{e} < \frac{2}{5}$.
Hence $(n-2)^{n-1} = \left(1 - \frac{1}{n-1}\right)^{n-1} (n-1)^{n-1} \le \frac{2}{5} (n-1)^{n-1}$.

Next, we prove $(n-1)^{n-m}m^{m-1} \le \frac{1}{2}(n-1)^{n-1}$.
Observe $(n-1)^{n-m}m^{m-1} = \left(\frac{m-1}{n-1}\right)^{m-1} (n-1)^{n-1}$.
With $x = m-1$ and $y = n-1$, $\left(\frac{m-1}{n-1}\right)^{m-1} = \left(\frac{x}{y}\right)^x$;
since $\frac{d}{dx} \left(\frac{x}{y}\right)^x = (1 + \ln x - \ln y) \left(\frac{x}{y}\right)^x$, the only local maximum or minimum of this function occurs at $x = y/e$.
Thus, $\left(\frac{m-1}{n-1}\right)^{m-1}$ is maximized either at $m = \frac{n-1}{e} + 1$ or at the endpoints $m = 2$ and $m = n-2$.
We compare the function values at these points: At $m = 2$, $\left(\frac{m-1}{n-1}\right)^{m-1} = \frac{1}{n-1} \le \frac{1}{2}$.
At $m = \frac{n-1}{e}+1$, $\left(\frac{m-1}{n-1}\right)^{m-1} = \left(\frac{1}{e}\right)^{\frac{n-1}{e}} \le \frac{1}{2}$.
At $m = n-2$, $\left(\frac{m-1}{n-1}\right)^{m-1} \le \left(1 + \frac{1}{n-2}\right)^2 \left(1 - \frac{1}{n-1}\right)^{n-1}(n-1)^{n-1}$.
As before, $\left(1 - \frac{1}{n-1}\right)^{n-1} \le \frac{2}{5}$.
For $n \ge 12$, $\left(1 + \frac{1}{n-2}\right)^2 \le \left(1 + \frac{1}{10}\right)^2 = 1.21 < \frac{5}{4}$.
Thus $\left(1 + \frac{1}{n-2}\right)^2 \left(1 - \frac{1}{n-1}\right)^{n-1}(n-1)^{n-1} \le  \frac{1}{2} (n-1)^{n-1}$.

Finally, $\frac{3}{2} (n-1)^{n-2} \le \frac{3}{2(n-1)} (n-1)^{n-1} \le \frac{1}{10} (n-1)^{n-1}$ for $n \ge 16$.
Therefore, for $n \ge 16$,
\[|T_\cD| \le \frac{1}{2}(n-1)^{n-1} + \frac{1}{10} (n-1)^{n-1} + \frac{2}{5}(n-1)^{n-1} = (n-1)^{n-1} < \Phi(n).\]
This proves the claim for the case $2 \le m \le n-2$.

\noindent\textbf{Case 3: $m = n-1$.}\\
In this case, there are only two pods, $P_0 = \{0\}$ and $P_1 = \{1, 2, \dots, n-1\}$.
We introduce a new property of $\cR$ to approach this case.
\begin{lemma}
\label{lem:hij}
There is an ordering $q_0, q_1, q_2, \dots, q_{n-1}$ of the states in $Q_n$ such that
$(q_h, q_i, q_j) \in \cR$ for all $h, i, j \in \{1, \dots, n-1\}$ with $h < i$ and $h < j$.
\end{lemma}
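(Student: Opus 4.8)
The plan is to translate the required relation $(q_h,q_i,q_j)\in\cR$ into a purely set-theoretic condition on the quotients and then to pin down exactly which structural feature of the quotients the ordering encodes. Since $\cD$ is a minimal suffix-convex DFA, Proposition~\ref{prop:scimprespect} shows it respects the maximal triple system $\cR^*=\{(p,q,r)\mid K_p\cap K_q\subseteq K_r\}$, and (as noted in the excerpt) every triple system respected by $\cD$ is contained in $\cR^*$. Since it suffices to bound the transformations respecting $\cR^*$, I would work with $\cR=\cR^*$, so that $(p,q,r)\in\cR$ if and only if $K_p\cap K_q\subseteq K_r$. In the present case every nonzero state is symmetric, so for nonzero $p$ the set $M:=L\cap K_p=K_0\cap K_p$ is independent of $p$; writing $N_p:=K_p\setminus L$ we get $K_p=M\sqcup N_p$, whence for all nonzero $p,q,r$
\[(p,q,r)\in\cR\iff N_p\cap N_q\subseteq N_r.\]
Moreover Lemma~\ref{lem:noppq}, read on quotients, gives $K_p\not\subseteq K_q$ for all distinct states, so $\{N_p\mid p\neq 0\}$ is an antichain under inclusion.

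Next I would reformulate everything through supports. For a word $w\notin L$ let its \emph{support} be $\{p\neq 0\mid w\in K_p\}$, which equals the preimage $\delta_w^{-1}(f)$; the sets $N_p$ are recovered as $N_p=\{w\mid p\in\operatorname{supp}(w)\}$, so $(p,q,r)\in\cR$ holds precisely when every support containing $\{p,q\}$ also contains $r$. Taking $q_0:=0$ (which never appears in a required triple), a short computation shows the ordering $q_1,\dots,q_{n-1}$ works if and only if every support of size at least $2$ is a \emph{suffix} $\{q_h,q_{h+1},\dots,q_{n-1}\}$ of the order. Because the suffixes of a linear order are totally ordered by inclusion, this is possible exactly when the supports of size $\ge 2$ form a chain under inclusion; conversely, given such a chain one builds the order by listing the states outside the largest support first and then filling in the successive set differences. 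Thus the entire lemma reduces to one claim: \emph{any two supports of size at least two are nested.}

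The heart of the proof — and the step I expect to be the main obstacle — is establishing this nesting, and here the hypothesis $\kappa(L^*)=2^{n-1}+2^{n-2}$ is indispensable: the antichain of supports $\{p_1,p_2\},\{p_2,p_3\},\{p_1,p_3\}$ satisfies every property derived so far yet admits no valid ordering, so nesting cannot follow from the triple-system axioms and Lemma~\ref{lem:noppq} alone. I would argue by contradiction using exact tightness. As in Lemma~\ref{lem:starreq}, the reachable subsets of the star $\eps$-NFA are exactly $\{0'\}$ together with the nonempty $S$ satisfying $f\in S\Rightarrow 0\in S$, and there are precisely $2^{n-1}+2^{n-2}$ of them; meeting the bound therefore forces \emph{every} such $S$ to be reachable and all of them pairwise distinguishable, with no slack. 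Assuming two size-$\ge 2$ supports $A,B$ with $x\in A\setminus B$ and $y\in B\setminus A$, I would exploit the fact that in this construction a set can only grow through the $\eps$-transition at $f$ (ordinary transitions never increase cardinality), together with the incomparability of $A$ and $B$ — i.e.\ the existence of transformations sending exactly $A$, respectively $B$, onto $f$ — to exhibit a collision: either a required set that cannot be reached, or two distinct required sets that no word separates. Any such collision contradicts the exact count $2^{n-1}+2^{n-2}$, proving the supports nested and hence the lemma. Carrying out this reachability/distinguishability bookkeeping cleanly, while handling the special roles of $0$ and $f$, is the delicate part.
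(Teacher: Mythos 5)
Your reduction is correct as far as it goes: with $\cR=\cR^*=\{(p,q,r)\mid K_p\cap K_q\subseteq K_r\}$ (a legitimate substitution, since every $t\in T_\cD$ respects $\cR^*$ and the surrounding argument only needs an upper bound on the transformations respecting \emph{some} system the DFA respects), the all-symmetric hypothesis of Case~3 does give $K_0\cap K_p$ independent of $p\neq 0$, the translation to $N_p\cap N_q\subseteq N_r$ is right, and the equivalence ``an ordering exists iff every support of size $\ge 2$ is a suffix iff the size-$\ge 2$ supports form a chain'' checks out. But the entire content of the lemma has been pushed into the one claim you do not prove: that any two supports of size at least two are nested. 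Your own text concedes this (``the step I expect to be the main obstacle,'' ``the delicate part''), and the sketch offered for it --- ``exhibit a collision: either a required set that cannot be reached, or two distinct required sets that no word separates'' --- names a strategy without identifying the actual mechanism. It is not evident which unreachable set or which indistinguishable pair the incomparability of $A$ and $B$ would produce, and nothing in the proposal narrows this down. As written, this is a genuine gap, not a deferred routine verification.

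For comparison, the paper's proof does not argue about supports or about a failure of tightness at all; it extracts a single witness transformation from one specific consequence of tightness, namely that the full set $Q_n$ is reachable in the star $\eps$-NFA. Since sets grow only via the $\eps$-transition at $f$, reachability of $Q_n$ yields a reachable $S$ of size $n-1$ and a $t$ with $St=Q_n\setminus\{0\}$, whence $0t\neq 0$ and $\im(t)=Q_n\setminus\{0\}$. The ordering is then simply the orbit $q_h=0t^h$: because all nonzero states are symmetric, $(0,q_{i-h},q_{j-h})\in\cR$, and pushing this forward by $t^h$ via Condition~1 gives $(q_h,q_i,q_j)\in\cR$; injectivity of the orbit follows from Lemma~\ref{lem:noppq} and the fact that $|\im(t)|=n-1$. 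If you want to salvage your route, this is the missing ingredient: the chain of supports is exactly the chain of suffixes $\{q_h,\dots,q_{n-1}\}$ generated by iterating this $t$, and I see no way to certify nestedness without producing some such transformation explicitly. Either complete your collision argument in full detail or restructure around the orbit construction.
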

\begin{proof}
Consider the NFA construction for star on $\cD$.
Since $L$ meets the bound for star, all of the states listed in Lemma~\ref{lem:noppq} must be reachable; in particular, the set $Q_n$ must be reachable.
The size of a reachable set in the NFA can only be increased by using the $\eps$-transition from $f$ to $0$.
Hence, there is some reachable set $S \subseteq Q_n$ with size $n-1$ and a transformation $t$ such that $St =Q_n$ in the NFA.
In $\cD$, this implies that $t$ must map $S$ to $Q_n\setminus \{0\}$ exactly, so that the $\eps$-transition in the NFA fills in the $0$ state in the NFA.
In summary, this transformation $t$ has the property $0t \in P_1$ and $\im(t) = P_1$ in $\cD$.

For $h = 0, 1, \dots, n-1$ define $q_h = 0t^h$; we check that $(q_h, q_i, q_j) \in \cR$ whenever $h < i, j$ and that $q_1, \dots, q_{n-1}$ is a permutation of $1, 2, \dots, n-1$.
Obviously $q_0 = 0$, and no other $q_i$ is equal to $0$.
If $1 \le h < i, j \le m$, then we have $(0, q_{i-h}, q_{j-h}) \in \cR$ since all states in $P_1$ are symmetric.
By Condition 1, we have $(0t^h, q_{i-h}t^h, q_{j-h}t^h) = (q_h, q_i, q_j) \in \cR$ as required.

To a contradiction, suppose $\{q_0, \dots, q_{n-1}\}$ is not equal to $Q_n$.
Then there is some repeated state $q_h = q_i$ with $1 \le h < i \le n-1$, where $h$ is chosen as small as possible.
Since $|\im(t)| = n-1$, there can only be one pair of distinct states that $t$ maps to the same location; these are $q_{h-1}$ and $q_{i-1}$.
Thus, the states of $Q_n \setminus \{q_0, q_1, \dots, q_{n-1}\}$ must be permuted by $t$.
However, if $p \not\in \{q_0, q_1, \dots, q_{n-1}\}$, then $(0, q_{i-h}, p) \in \cR$.
It follows by Condition 1 that $(0t^h, q_{i-h}t^h, pt^h) = (q_h, q_h, pt^h) \in \cR$, and thus $pt^h = q_h$.
This contradicts the fact that $Q_n \setminus \{q_0, q_1, \dots, q_{n-1}\}$ is permuted by $t$.
Therefore $\{q_0, \dots, q_{n-1}\} = Q_n$.
\qed
\end{proof}

We use this lemma to drastically reduce the number of possible transformations that respect $\cS$.
Without loss of generality, relabel the states of $P_1$ so that $q_0 = 0, q_1 = 1, q_2 = 2, \dots, q_{n-1} = n-1$.


Suppose now that $t \colon Q_n \to Q_n$ respects $\cS$ and $pt = qt$ for some $p, q \in Q_n$ with $p < q$.
For any $r \in P_1$ such that $r > p$, $(p, q, r) \in \cR$ by Lemma~\ref{lem:hij} and hence $(pt, qt, rt) \in \cR$ by Condition 1.
By Lemma~\ref{lem:noppq}, it must be that $pt =qt = rt$; it follows that $pt = (p+1)t = (p+2)t = \cdots = (n-1)t$.
In other words, there is a unique state $p_0 \in Q_n$ such that states $0, 1, 2, \dots, p_0$ are all mapped to different locations by $t$, and $p_0, p_0+1, \dots, n-1$ are all mapped to the same state by $t$.
Observe that $t$ is determined by the $(p_0+1)$-tuple $(0t, 1t, 2t, \dots, p_0t)$, since all the remaining states must be mapped to $p_0t$.
Within this tuple, unless $t$ is the transformation $(Q_n \to 0)$, the entries $1t, 2t, \dots, p_0t$ must lie in $P_1$ by monotonicity.

Thus, the number of transformations that fix $0$ is at most $\sum_{p_0 = 0}^{n-1}p_0! {n-1 \choose p_0} = \sum_{p_0 = 0}^{n-1}\frac{(n-1)!}{(n-1-p_0)!}$.
Similarly, the number of transformations that map $0$ into $P_1$ is at most $\sum_{p_0 = 0}^{n-2}(p_0+1)!{n-1 \choose p_0+1} = \sum_{p_0 = 0}^{n-2}\frac{(n-1)!}{(n-2-p_0)!}$.
Hence, the total number of transformations that respect $\cS$ is at most
\begin{align*}
&\sum_{p_0=0}^{n-1} \frac{(n-1)!}{(n-1-p_0)!} + \sum_{p_0 = 0}^{n-2}\frac{(n-1)!}{(n-2-p_0)!}\\
\le & (n-1)! 2\sum_{i=0}^\infty \frac{1}{i!}\\
\le & 2e (n-1)!
\end{align*}
For $n \ge 4$, $2e(n-1)! \le 2e (1 \cdot 2) (n-1)^{n-3} < 12 (n-1)^{n-3} \le n(n-1)^{n-2}$.
Therefore $|T_\cD| < \Phi(n)$ as required.
\qed
\end{proof}

\end{document}